\newtheorem{proposition}{\bf Proposition}
\newtheorem{lemma}{\bf Lemma}
\newtheorem{definition}{\bf Definition}
\newcounter{step}
\newlength{\totlinewidth}
\newenvironment{algorithm}{%
  \rule{\linewidth}{1pt}
  \begin{list}{}%
    {\usecounter{step}%
      \settowidth{\labelwidth}{\textbf{Step 2:}}%
      \setlength{\leftmargin}{\labelwidth}%
      \setlength{\topsep}{-2pt}%
      \addtolength{\leftmargin}{\labelsep}%
      \addtolength{\leftmargin}{2mm}%
      \setlength{\rightmargin}{2mm}%
      \setlength{\totlinewidth}{\linewidth}%
      \addtolength{\totlinewidth}{\leftmargin}%
      \addtolength{\totlinewidth}{\rightmargin}%
      \setlength{\parsep}{0mm}%
      \raggedright}}%
  {\end{list}%
  \rule{\linewidth}{1pt}}
\newcounter{substep}
\newlength{\aligntop}
\newlength{\alignbot}
\renewenvironment{align}{%
  \vspace{\aligntop}
  \start@align\@ne\st@rredfalse\m@ne
}{%
  \math@cr \black@\totwidth@
  \egroup
  \ifingather@
    \restorealignstate@
    \egroup
    \nonumber
    \ifnum0=`{\fi\iffalse}\fi
  \else
    $$%
  \fi
  \ignorespacesafterend%
  \vspace{\alignbot}\par\noindent
} \makeatother
\begin{document}
\title{\huge Joint Access and Backhaul Resource Management in Satellite-Drone Networks: A Competitive Market Approach}

\author{ Ye Hu,  {Mingzhe Chen}, Walid Saad, \emph{Fellow, IEEE} \vspace*{-2em}\\ 
\thanks{A preliminary version of this work was presented at IFIP NTMS \cite{hu2019competitive}.}
\thanks{Y. Hu, W. Saad are with the Wireless@VT, Bradley Department of Electrical and Computer Engineering, Virginia Tech, Blacksburg, VA, USA, Emails: \protect{yeh17@vt.edu}, \protect{walids@vt.edu}.}
\thanks{M. Chen is with the Department of Electrical Engineering, Princeton University, Princeton, NJ, USA,  and the Future Network of Intelligence Institute, Chinese University of Hong Kong, Shenzhen, China, Email: \protect{mingzhec@princeton.edu}.}}

\maketitle
 \vspace{-0.55cm} 
\begin{abstract}
\boldmath
In this paper, the problem of user association and resource allocation is studied for an integrated satellite-drone network (ISDN). In the considered model, drone base stations (DBSs) provide downlink connectivity, supplementally,  to ground users whose demand cannot be satisfied by terrestrial small cell base stations (SBSs). Meanwhile, a satellite system and a set of terrestrial macrocell base stations (MBSs) are used to provide resources for backhaul connectivity for both DBSs and SBSs. For this scenario, one must jointly consider resource management over satellite-DBS/SBS backhaul links, MBS-DBS/SBS terrestrial backhaul links, and DBS/SBS-user radio access links as well as user association with DBSs and SBSs. This joint user association and resource allocation problem is modeled using a competitive market setting in which the transmission data is considered as a good that is being exchanged between users, DBSs, and SBSs that act as ``buyers'', and DBSs, SBSs, MBSs, and the satellite that act as ``sellers''. In this market, the quality-of-service (QoS) is used to capture the quality of the data transmission (defined as good), while the energy consumption the buyers use for data transmission is the cost of exchanging a good. 
According to the quality of goods, sellers in the market propose quotations to the buyers to sell their goods, while the buyers purchase the goods based on the quotation.
The buyers profit from the difference between the earned QoS and the charged price, while the sellers profit from the difference between earned price and the energy spent for data transmission. The buyers and sellers in the market seek to reach a Walrasian equilibrium, at which all the goods are sold, and each of the devices' profit is maximized.  
A heavy ball based iterative algorithm is proposed to compute the Walrasian equilibrium of the formulated market. 
Analytical results show that, with well-defined update step sizes, the proposed algorithm is guaranteed to reach one Walrasian equilibrium. Simulation results show that, at the achieved Walrasian equilibrium solution, the proposed algorithm can yield a two-fold gain in terms of  the number of radio access links with a data rate of over $40$ Mbps, and a three-fold gain in terms of the number of backhaul links  with a data rate greater than $1.6$ Gbps. 
 
 \end{abstract}


\renewcommand{\thefootnote}{\fnsymbol{footnote}}




%
\IEEEpeerreviewmaketitle
 \vspace{-0.35cm}  

\section{Introduction}

Unmanned aerial vehicles (UAVs), popularly known as drones, provide an effective approach to complement the connectivity of terrestrial wireless networks by serving as flying drone base stations (DBSs) that provide  coverage to hotspots, disaster-affected, or rural areas \cite{chen2017caching,8766136, saad2019vision,mozaffari2018beyond,mozaffari2017mobile}. However, the limited-capacity of terrestrial wireless backhaul links can significantly limit the quality-of-service (QoS) provided by DBSs. To overcome this backhaul challenge, satellite communication systems can be leveraged to provide high data rate backhaul services for DBSs with seamless wide-area coverage as discussed in \cite{jia2016broadband} and \cite{zhao2018beam}. 
 However, to integrate satellite systems with DBSs and terrestrial networks, one must overcome many challenges such as spectrum resource management, network modeling, and cross-layer power management. In particular, spectrum sharing and resource management for terrestrial communication systems and satellite backhaul communication systems is a major challenge for integrated satellite/terrestrial networks,  given the nonlinear coupling between terrestrial links and satellite backhaul links \cite{wei2014key, jaber20165g}. 
\vspace{-0.35cm}
\subsection{Related Works}
 
The existing literature such as in \cite{gapeyenko2018flexible, cicek2018backhaul, nguyen2018novel, 8614433, galkin2017stochastic, alzenad2018fso,kalantari2017user,kalantari2017backhaul} has studied a number of problems related to resource allocation for drone-based systems with backhaul considerations. The work in \cite{gapeyenko2018flexible} investigates the problem of dynamic link rerouting in a UAV-based relay network. In\cite{cicek2018backhaul}, the authors optimize the 3D locations of UAVs as well as the bandwidth allocation so as to maximize data rates. 
 In\cite{nguyen2018novel}, the authors investigate the problem of radio resource allocation in a UAV-assisted network, while employing a non-orthogonal multiple access (NOMA) scheme on the wireless backhaul transmission. In \cite{8614433}, a joint caching and resource allocation is investigated for a network of cache-enabled UAVs.  The authors in \cite{galkin2017stochastic} study the existence of an optimal UAV height that meets backhaul requirements at the UAVs while maximizing the coverage probability of the ground user. The feasibility of a novel vertical backhaul/fronthaul framework in which the unmanned flying platforms transport the backhaul/fronthaul traffic between the access and core networks via point-to-point free space optics links is investigated in \cite{alzenad2018fso}.
 The work in \cite{kalantari2017user} develops a novel algorithm to find efficient 3D locations for the DBSs and optimize the bandwidth allocation and user association so as to maximize the sum-rate of the users. 
 In \cite{kalantari2017backhaul}, the authors study how network-centric and user-centric wireless backhaul solutions can affect the number of served users.
 Despite their promising results, the works in\cite{gapeyenko2018flexible, cicek2018backhaul, nguyen2018novel, 8614433,galkin2017stochastic, alzenad2018fso,kalantari2017user,kalantari2017backhaul} do not consider the implementation of an \emph{integrated satellite-drone network (ISDN)}, in which satellite systems provide backhaul connectivity for UAVs. 
 
Meanwhile, the works in \cite{zhao2018beam} and \cite{li2018investigation,palazzi2004satellite, palazzi2005enhancing} have studied a number of problems related to ISDNs.  The authors in \cite{zhao2018beam} propose a blind beam tracking approach for a Ka-band UAV-satellite communication system. 
 The work in \cite{li2018investigation} studies the Doppler effect, the pointing error effect, and the atmospheric turbulence effect on the communication performance of a UAV-to-satellite optical communication system. The work in \cite{palazzi2004satellite} introduces an innovative architecture in which high altitude platform drones and UAVs are deployed to improve the visibility of satellites.
The authors in \cite{palazzi2005enhancing} develop an architecture in which high altitude platform drones are connected to a satellite to enhance telecommunication capabilities. 
 However, the works in \cite{zhao2018beam} and \cite{li2018investigation, palazzi2004satellite, palazzi2005enhancing} do not consider any spectrum sharing problem in an ISDN system. Indeed, sharing the stringent spectrum resources among satellite backhaul links, terrestrial backhaul links, and the radio access links has a direct impact on the achievable ISDN data rates.
 Hence, the resource allocation problem over all communication links must be jointly studied within the context of an ISDN. 
\vspace{-0.35cm}
\subsection{Contributions}
The main contribution of this paper is a novel framework for jointly managing resources across radio access links, satellite backhaul links, and terrestrial backhaul links, while maximizing the data rates in ISDNs, in a distributed way. 
This joint backhaul and access resource management problem is formulated as a competitive market in which the data transmission service, including both radio access data and backhaul data, is viewed as good that must be exchanged among the wireless users who seek to maximize their profits. 
While prior works such as \cite{mochaourab2011walrasian, mochaourab2012walrasian, duan2017distributed} used market models to study resource allocation in different scenarios, those works have not jointly analyzed drone-assisted resource allocation in wireless networks with backhaul considerations. In contrast, here, we need a new joint user association and resource allocation scheme tailored to the ISDN system and whose goal is to optimize the data rates at each communication link and the sum rate in the system with the consideration of the data demand at each user, the resource budget of each base station (BS), and the fairness among each users, BSs, and the satellite's benefits. 
 Unlike previous market-based solutions such as \cite{mochaourab2011walrasian} and \cite{duan2017distributed} that only consider resource allocation among only two communication links which do not include backhaul links, we propose a total payoff maximizing solution that optimizes the sum-rate of the ISDN network, considering the resource allocation for satellite backhaul links, macrocell base station (MBS) backhaul links, and communication links between terrestrial users and their serving small cell base stations (SBSs) or DBSs. Our key contributions include: 
\begin{itemize}
\item We develop a novel framework to jointly consider resource allocation over satellite-DBS/SBS backhaul links, terrestrial backhaul links, and DBS/SBS-user radio access links as well as user association with DBSs and SBSs within an ISDN. In this system, the SBSs and DBSs provide downlink data service to terrestrial users, while the MBSs, supplemented with a low earth orbit (LEO) satellite system, provide backhaul service to SBSs and DBSs.

\item We formulate this joint ISDN user access and resource allocation problem as a competitive market, in which the users request downlink data service with high data rates, while the SBSs and DBSs seek to satisfy these data service requests with minimal power consumption. Meanwhile, the SBSs and DBSs request a backhaul service with high data rates, while the MBSs and the LEO satellite seek to provide them with a backhaul service with minimal power consumption. A Walrasian equilibrium is applied to solve the formulated market. In this regard, we prove the existence of the Walrasian equilibrium, at which the communication performance in the system are optimized.


\item We propose a distributed, iterative algorithm to optimize the data rates at each user, SBS, DBS, MBS and the satellite, based on dual decomposition.  Our analytical results show that, with well defined update step sizes, the proposed algorithm is guaranteed to reach a Walrasian equilibrium at which the total payoff gained by all the ISDN devices is maximized.


\item The results also show that, the proposed heavy ball based algorithm yields the same performance as a centralized optimization algorithm and a $10\%$ improvement on convergence speed, compared to a sub-gradient algorithm. Simulation results also show that, by considering the Walrasian equilibrium in the market, the proposed algorithm can yield over two-fold gain in terms of  the number of radio access links with over $40$ Mbps rates, and over three-fold gain in terms of the number of backhaul links with over $1.6$ Gbps rates. 
\end{itemize}

The rest of this paper is organized as follows. The system model are described in Section \uppercase\expandafter{\romannumeral2}. In Section \uppercase\expandafter{\romannumeral3}, the problem formulation and proposed framework for joint backhaul and access resource management is developed and discussed. In Section \uppercase\expandafter{\romannumeral4}, numerical simulation results are presented. Finally, conclusions are drawn in Section \uppercase\expandafter{\romannumeral5}.
 
\begin{figure}
\setlength{\belowcaptionskip}{-2pt}
\setlength{\abovecaptionskip}{-2pt} 
  \centering
  \includegraphics[width=10 cm]{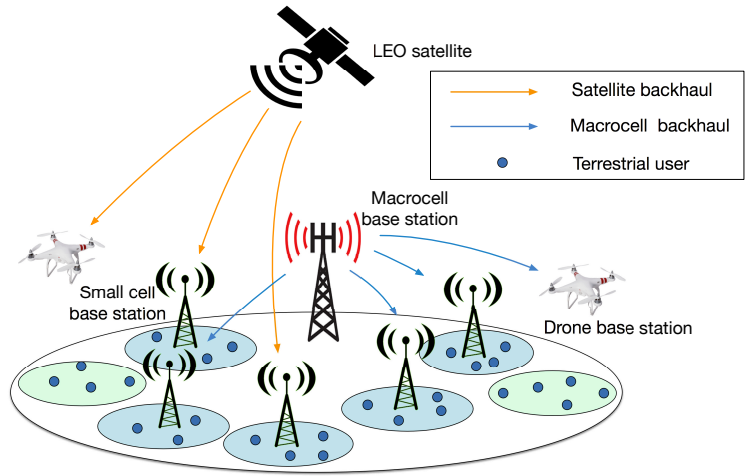}
  \caption{\footnotesize{Illustration of the studied integrated satellite-drone network topology.}}
  \label{Fig. 1}
  \centering
  \vspace{-0.5cm}
\end{figure}

\vspace{-0.12cm}
\section{System Model}
\vspace{-0.12cm}
\begin{table*}\scriptsize
  \newcommand{\tabincell}[2]{\begin{tabular}{@{}#1@{}}#2\end{tabular}}
\renewcommand\arraystretch{1}
 \caption{
    \vspace*{-0.3cm}List of notations}\vspace*{-1em}
\centering  
\begin{tabular}{|c||c|c||c|}
\hline
Notation & Description & Notation & Description\\
\hline
$U$ & Number of users & $h\left(i,j\right)$ & Channel gain between transmitter $i$ and receiver $j$ \\
\hline
$S$ & Number of SBSs & $d_{i,j}$& Distance between transmitter $i$ and receiver $j$\\
\hline
$E$ & Number of DBSs & $L_1$& Rician fading channel coefficient\\
\hline
$M$ & Number of MBSs & $L_2\left(d_{i,j}\right)$& Large-scale channel effects between $i$ and $j$\\
\hline
$N$ & Number of BSs & $\kappa_{i,j}$& Offset angle of $i$ in direction of $j$\\
\hline
$H_n$ & Hover time of DBS $n$ &$G_M\left(\kappa_{i,j}\right)$ & Transmit gain at offset angle $\kappa_{i,j}$  \\
\hline
$\Delta$& Studied time duration  &  $G_R\left(\kappa_{i,j}\right)$ & Receive gain at offset angle $\kappa_{i,j}$ \\
\hline
 $C_u$ & Data service requested by user $u$&$\Omega_{i,j,t}$ & Interference at link between $i$ and $j$ at slot $t$\\
\hline
$T$ & Number of time slots within $\Delta$ &$\gamma_{i,j,t}$ & SINR at link between $i$ and $j$ at slot $t$ \\
\hline
$T_n$ & Number of time slots within $H_n$ &$c_{i,j,t}$ & Data rate at link between $i$ and $j$ at slot $t$  \\
\hline
$\tau$ & Duration of one time slot & $\Pi_i$ & QoS requirement at device $i$\\
\hline
$\mathcal{I}^B$ & Set of links between BS $n$ and user $u$ & $q_{i,j,t}$ & Unit price of time slots at link between $i$ and $j$ \\
\hline
$\mathcal{I}^M$ & Set of links between MBS $m$ and BS $n$ & $V_i$ & Payoff of seller $i$ \\
\hline
$\mathcal{I}^S$ & Set of links between the satellite and BS $n$  & $W_j$& Payoff of buyer $j$ \\
\hline
 $\boldsymbol{\rho}$ & Vector of allocation scheme at $\mathcal{I}^B$& $L_i$ & Lagrangian function at device $i$\\
\hline
$\boldsymbol{\delta}$& Vector of allocation scheme at $\mathcal{I}^M$& $D_i$ & Dual function at device $i$\\
\hline
  $\boldsymbol{\beta}$& Vector of allocation scheme at $\mathcal{I}^S$& $\boldsymbol{\lambda}$ & Vector of dual variables at $\mathcal{I}^B$\\
\hline
$\boldsymbol{\theta}$ & Vector of request scheme at $\mathcal{I}^B$& $\boldsymbol{\varsigma}$ & Vector of dual variables at $\mathcal{I}^M$\\
\hline
$\boldsymbol{\varphi}$& Vector of request scheme at $\mathcal{I}^M$& $\boldsymbol{\xi}$ & Vector of dual variables at $\mathcal{I}^S$\\
\hline
$\boldsymbol{\varepsilon}$& Vector of allocation scheme at $\mathcal{I}^S$& $\pi^{\left(k\right)}$ & Updating step size at iteration $k$\\
\hline
\end{tabular}
\vspace{-0.8cm}
\end{table*}
Consider a downlink ISND that consists of a set $\mathcal{S}$ of $S$ SBSs, a set $\mathcal{E}$ of ${E}$ DBSs supplementing the SBSs, a set $\mathcal{M}$ of $M$ MBSs providing terrestrial backhaul links, a LEO satellite constellation providing a satellite backhaul system, and a set $\mathcal{U}$ of $U$ users as shown in Fig. \ref{Fig. 1}.
In this system, each user requests data service from either an SBS or a DBS whose hover time is limited by its propulsion energy consumption, while the backhaul connections of the SBSs and DBSs are provided by either the LEO satellite or ground MBSs. Let $H_n$ be the hover time of DBS $n$, defined as the time duration that a DBS can use to hover over a certain area to serve the ground users\cite{mozaffari2017wireless}.
Hereinafter, we refer to the DBSs and SBSs in the network as BSs and we define $\mathcal{N}$ as the set of $N={S}+{E}$ BSs. The three-dimensional (3D) coordinates of the users, BSs, and MBSs are defined as $\boldsymbol{S}_{1}=\left[\boldsymbol{s}_{1,1},\cdots,\boldsymbol{s}_{1,U}\right]$, $\boldsymbol{S}_{2}=\left[\boldsymbol{s}_{2,1},\cdots,\boldsymbol{s}_{2,N}\right]$, and $\boldsymbol{S}_{3}=\left[\boldsymbol{s}_{3,1},\cdots,\boldsymbol{s}_{3,M}\right]$, respectively.  Each element $\boldsymbol{s}_{i,j}$ captures the 3D coordinates of a specific location, i.e.,  $\boldsymbol{s}_{i,j}=\left(x_{ij},y_{ij},f_{ij}\right)$. {The satellite will start at an initial location with 3D coordinate $\boldsymbol{s}_{0}=\left(x_{0},y_{0},f_{0}\right)$, and move in the direction of the x-axis with speed $\upsilon$}.

We assume that the amount of data (in bits) that a user $u$ requests is $C_u$, which has to be provided to the user within a time duration ${\Delta}$. During ${\Delta}$, the BSs also request backhaul data service to download the requested data from the core network. Here, we refer the amount of data that each user requests as \emph{data service} and the amount of backhaul data that each BS requests as \emph{backhaul service}. The time required to satisfy each user's data service request and each BS's backhaul service request depends on several factors, including the effective service time and the data rate at the corresponding user and BS. In the considered scenario, all the communication links work over the Ka band ($26.5$ -- $40$ GHz), which is a well established millimeter wave (mmW) range suitable for satellite communication and future 5G links, as discussed in \cite{rappaport2013millimeter}. 
A time division multiple access (TDMA) scheme is adopted to support directional transmissions at the significantly high path loss mmW band, while maintaining low complex designs for transceivers\cite{semiari2017joint}. 
Note that, in this model, as done in \cite{panagopoulos2004satellite}, the downlink interference caused by the satellite on the terrestrial links is considered to be negligible due to the satellite's limited transmit power and long-range transmission. 

Moreover, the considered time duration ${\Delta}$ is further divided, by the BSs, MBSs, and satellite into a set $\mathcal{T}$ of $T$ time slots, each of which has a time duration of $\tau$. Within duration ${\Delta}$, each user (for radio transmission) or BS (for backhauling) will be served at least once in at least one time slot which is the \emph{effective service time} of this user or BS. Note that, among the $T$ time slots in the studied duration ${\Delta}$, each DBS $n\in \mathcal{E}$ is only operational for $T_n=\left\lfloor {\frac{H_n}{\tau}} \right\rfloor$ time slots, as limited by its hover time. We also assume that each DBS is operational from the start of the studied duration, and will hover at its location for $T_n$ time slots.  We also assume that the SBSs, MBSs, and satellite will be operational during the whole process. In other words, $T_n=T$, $T_m=T$ and $T_S=T$ for all $n\in\mathcal{S}$ and $m\in\mathcal{M}$. The \emph{data rates} at the communication links in the ISDN system are modeled next.

\vspace{-0.45cm}
\subsection{Channel Model}

In the studied ISDN, the mmWave links are affected by the surrounding obstacles and, thus, these communication links can be either line-of-sight (LoS) or non-line-of-sight (NLoS). 
The channel gain between transmitter $i$, which could be a BS, MBS, or the satellite, and its associated user (or BS for backhaul) $j$ is defined as\cite{semiari2017joint}:
\begin{equation}\label{eq:channel}
\begin{split}
h\left(i, j\right) = \left\{ {\begin{array}{*{20}{c}}
{{L_{1}10^{\frac{-L_2\left(d_{i,j}\right)}{10}}G_{\textrm{M}}\left(\kappa_{i,j}\right)G_{\textrm{R}}\left(\kappa_{j,i}\right)}\;\;\;\;\;\;\varpi_{i,j}=1},\;\;\\
{{0\;\;\;\;\;\;\;\;\;\;\;\;\;\;\;\;\;\;\;\;\;\;\;\;\;\;\;\;\;\;\;\;\;\;\;\;\;\;\;\;\;\;\;\;\;\;\;\;\varpi_{i,j}=0}}, 
\end{array}} \right.
\end{split}
\end{equation} 
where $L_{1}$ is the Rician fading channel coefficient, and $L_{2}\left(d_{i,j}\right)=\alpha'+\alpha\log_{10}d_{i,j}+\chi$ captures the large-scale channel effects over the mmW link between $i$ and $j$\cite{ghosh2014millimeter}. Here, $\alpha$ is the slope of the fit and $\alpha'$, the intercept parameter, is the path loss (dB) for 1 meter of distance.  In addition, $\chi$ models the deviation
in fitting (dB) which is a Gaussian random variable with zero mean and variance $\xi ^2$  for $1$ meter of distance. $d_{i,j}$ is the distance between $i$ and $j$. For example, if $i\in\mathcal{U}$ and $j\in\mathcal{N}$, $d_{i,j}=\sqrt {\left(x_{1,i}-x_{2,j}\right)^2+\left(y_{1,i}-y_{2,j}\right)^2+\left(f_{1,i}-f_{2,j}\right)^2} $. 
$G_{\textrm{M}}\left(\kappa_{i,j}\right)$ is the transmit gain of an antenna at an offset angle $\kappa_{i,j}$, which is defined as:
\begin{equation}\label{eq:txgain}
\begin{split}
G_{\textrm{M}}\left(\kappa_{i,j}\right)  = \left\{ {\begin{array}{*{20}{c}}
{{Q_{\textrm{MM}}},\;\;\;~\textrm{if~$\kappa_{i,j}\le\frac{\kappa_{\textrm{T}}}{2}$}},\\
{{\;\;\;\;Q_{\textrm{MS}},\;\;\;\;\;~\textrm{otherwise}}}, \;\;
\end{array}} \right.
\end{split}
\end{equation} 
where $Q_{\textrm{MM}}$ and $Q_{\textrm{MS}}$ are the transmit antenna gains of the main lobe and the side lobe, respectively. $\frac{\kappa_{\textrm{T}}}{2}$ is the half power beam width.  $\kappa_{i,j}$ is the offset angle (from the boresight direction) of $i$'s transmit antenna in the direction of $j$'s receiver antenna. $G_R\left(\kappa_{j,i}\right)$ is the receive gain of an antenna at an offset angle $\kappa_{j,i}$, which is defined as:
\begin{equation}\label{eq:txgain}
\begin{split}
G_{\textrm{R}}\left(\kappa_{j,i}\right)  = \left\{ {\begin{array}{*{20}{c}}
{{Q_{\textrm{RM}}},\;\;\;~\textrm{if~$\kappa_{j,i}\le\frac{\kappa_{\textrm{R}}}{2}$}},\\
{{\;\;\;\;Q_{\textrm{RS}},\;\;\;~\textrm{otherwise}}}, 
\end{array}} \right.
\end{split}
\end{equation} 
where $Q_{\textrm{RM}}$ and $Q_{\textrm{RS}}$ are the receive antenna gains of the main lobe and the side lobe, respectively. $\frac{\kappa_{\textrm{R}}}{2}$ is the half power beam width.  $\kappa_{j,i}$ is the offset angle (from the boresight direction) of the link $j$'s receive antenna in the direction of $i$'s transmitter antenna. 
$\varpi_{i,j}=1$ indicates that the link between $i$ and $j$ is LoS, otherwise, it is NLoS.  In fact, $\varpi_{i,j}$ is a Bernoulli random variable with probability of success $\iota  _{i,j}=e^{\phi_{\textrm{los}} d_{i,j}}$, where $\phi_{\textrm{los}}$ is a parameter determined by the density and the average size of the blockage obstacles in the surrounding area. 
Note that, as done in \cite{zeng2016wireless}, we assume that all the users have LoS paths to DBSs as the DBSs are assumed to hover at relatively high altitudes and the probability of having scatterers around DBS-user link is very small. That is, the probability of  a LoS path $\iota _{i,j}=1$ when $i\in\mathcal{E}$. Similarly, the probability of LoS path at the satellite backhaul link $\iota _{i,j}=1$, as $j\in\mathcal{N}$ and $i$ being the satellite.

In our model, the users and BSs will estimate the interference caused by the surrounding communication links in the system to find the the optimal association and allocation scheme that maximizes their data rates. In particular, the path loss between the interfering BSs or MBSs to the victim users or BSs is estimated based on (1), in which the bore sight angles of interfering transmitters shall be estimated.
 However, estimating interference in an ISDN is not trivial, as it can yield significant overhead and delay for the users and BSs, so that they can gather information on all resources allocated to neighboring links. 
Thus, instead of gathering such information, the bore sight angles of interfering transmitters are assumed, at each victim BS or user, to be independently and uniformly distributed in $\left(0, 2\pi\right)$. In such a case, random transmit antenna gain $G_{\textrm{M}}\left(\kappa_{i,j}\right)$ of interfering transmitters can yield an average value of $\frac{\kappa_{\textrm{T}}}{2\pi}Q_{\textrm{MM}}+\left(1-\frac{\kappa_{\textrm{T}}}{2\pi}\right)Q_{\textrm{MS}}$. The random receive antenna gain $G_{\textrm{R}}\left(\kappa_{i,j}\right)$ of interfering transmitters can yield an average value of $\frac{\kappa_{\textrm{R}}}{2\pi}Q_{\textrm{RM}}+\left(1-\frac{\kappa_{\textrm{R}}}{2\pi}\right)Q_{\textrm{RS}}$. 
Note that, the average value of transmit antenna gain and receive antenna gain is adopt, respectively, as the value of interfering transmit antenna gain and victim receive antenna gain in all the following interference analysis.

\vspace{-0.45cm}
\subsection{Terrestrial Link Data Rate Analysis}

Next, we model the transmission links from BSs to users and the backhaul links from MBSs to BSs, which we call \emph{terrestrial links} hereinafter. 
The link between a BS $n\in\mathcal{N}$ and a user $u$ is denoted by $I^{B}_{n,u}$. The link between BS $n$ and MBS $m$ is denoted by $I^{M}_{n,m}$. 
The set of \emph{transmission links} between BSs and users is defined as $\mathcal{I}^{B}$, while the set of \emph{backhaul links} between MBSs and BSs is denoted by $\mathcal{I}^{M}$. 

Let $\boldsymbol{a}={\left[{a}_{111},\cdots, {a}_{NUT}\right]}$ be an allocation vector for the transmission links from BSs to users and let $\boldsymbol{w}={\left[{w}_{111},\cdots, {w}_{ NMT}\right]}$ be an allocation vector for the transmission links from MBSs to BSs. In vectors $\boldsymbol{a}$ and $\boldsymbol{w}$, ${a}_{nut}\in\left\{0, 1\right\}$ and ${w}_{nmt}\in\left\{0, 1\right\}$ represent the time slot allocation state at link $I^{B}_{n,u}$ and $I^{M}_{n,m}$, with ${a}_{nut}=1$ and ${w}_{nmt}=1$ indicating that time slot $t$ is assigned to links $I^{B}_{n,u}$ and $I^{M}_{n,m}$, respectively. Let $\boldsymbol{b}={\left[{b}_{11},\cdots, {b}_{NU}\right]}$ and $\boldsymbol{z}=\tiny{\left[{z}_{11},\cdots, {z}_{NM}\right]}$ be, respectively, the association vectors for the transmission links in $\mathcal{I}^{B}$ and backhaul links in $\mathcal{I}^{M}$. Here, ${b}_{nu}\in\left\{0, 1\right\}$ represents the association state between user $u$  and BS $n$ and ${z}_{nm}\in\left\{0, 1\right\}$ represents the association state between terrestrial MBS $m$ and BS $n$, with ${b}_{nu}=1$ and ${z}_{nm}=1$ indicating that user $u$ associates with BS $n$ and BS $n$ associates with  MBS $m$, respectively. 
We use  $\boldsymbol{\rho}={\left[{\rho}_{111},\cdots, {\rho}_{NUT}\right]}$ and $\boldsymbol{\delta}={\left[{\delta}_{111},\cdots, {\delta}_{NMT}\right]}$ to represent, respectively, the allocation schemes adopt by transmission links in $\mathcal{I}^B$ and backhaul links in $\mathcal{I}^M$, with $\rho_{nut}=a_{nut}b_{nu}=1$ indicating that user $u$ is served by BS $n$ at time slot $t$, $\delta_{nmt}=w_{nmt}z_{nm}=1$ indicating that BS $n$ is served by MBS $m$ at time slot $t$. Recall that each DBS $n\in\mathcal{E}$ is only operational for $T_n$ time slots from the start of the studied period, such that ${\rho}_{nut}=0$, ${\delta}_{nmt}=0$, for all $u\in\mathcal{U}$, $m\in\mathcal{M}$, $n\in\mathcal{E}$, and $t>T_n$.

In this case, the interference at link $I^{B}_{n,u}$ that uses time slot $t$ can be expressed as:
 \begin{equation}\label{eq:interference1}
\begin{split}
&\Omega_{n,u,t}\left(\boldsymbol{\rho}, \boldsymbol{\delta}\right)  = \sum\limits_{\substack{i\in\mathcal{I}^B,i\neq I^{B}_{n,u}}}\!\!\!\!\!\!\!\rho_{n_iu_it}P_{n_i}h\left(n_i, u\right)+\!\!\!\sum\limits_{\substack{i\in\mathcal{I}^M\\}}\delta_{n_im_it}P_{m_i}h\left(m_i, u\right),
\end{split}
\end{equation} 
where $u_i$, $n_i$, and $m_i$ represent, respectively, the terrestrial user, BS, and MBS in link $i$.
$P_{n_i}$ is the transmit power of the BS at link $i$ and $P_{m_i}$ is the transmit power of the MBS at link $i$.

Thus, the signal-to-interference-plus-noise ratio (SINR) of link $I^{B}_{n, u}$ is computed as: 
 \begin{equation}\label{eq:2}
\setlength{\abovedisplayskip}{3 pt}
\setlength{\belowdisplayskip}{3 pt}
\begin{split}
&\gamma_{n,u,t}\left(\boldsymbol{\rho}, \boldsymbol{\delta}\right)=\frac{P_nh\left(n,u\right)}{\Omega_{n,u,t}\left(\boldsymbol{\rho}, \boldsymbol{\delta}\right)+\sigma^2},
\end{split}
\end{equation} 
where $\sigma^2$ is the noise power. 
The data rate at this link $I^{B}_{n,u}$ will be given by: 
 \begin{equation}\label{eq:2}
\setlength{\abovedisplayskip}{3 pt}
\setlength{\belowdisplayskip}{3 pt}
\begin{split}
c_{n,u,t}\left(\boldsymbol{\rho}, \boldsymbol{\delta}\right)=B_N \log \left( {1 + \rho_{nut}\gamma_{n,u,t}\left(\boldsymbol{\rho}, \boldsymbol{\delta}\right)} \right),
\end{split}
\end{equation} 
where $B_N$ is the bandwidth allocated to terrestrial link $I_{n,u}^B$, which is assumed to be equal for all terrestrial links. 

The interference at backhaul link $I^{M}_{n,m}$ that uses time slot $t$ is:
 \begin{equation}\label{eq:interference2}\small
\setlength{\abovedisplayskip}{3 pt}
\setlength{\belowdisplayskip}{-3 pt}
\begin{split}
\Omega_{n,m,t}\left(\boldsymbol{\rho}, \boldsymbol{\delta}\right)=&\sum\limits_{\substack{i\in\mathcal{I}^B}}\rho_{n_iu_it}P_{n_i}h\left(n_i, n\right)+\!\!\!\!\!\!\sum\limits_{\substack{i\in\mathcal{I}^M, i\neq I^{M}_{n,m}\\}}\!\!\!\!\!\!\delta_{n_im_it}P_{m_i}h\left(m_i, n\right),
\end{split}
\end{equation} 
Thus, the SINR at this link $I^{M}_{n,m}$ is given as: 
 \begin{equation}\label{eq:2}
\setlength{\abovedisplayskip}{2 pt}
\setlength{\belowdisplayskip}{3 pt}
\begin{split}
&\gamma_{n,m,t}\left(\boldsymbol{\rho}, \boldsymbol{\delta}\right)=\frac{P_{m}h\left(m,n\right)}{\Omega_{n,m,t}\left(\boldsymbol{\rho}, \boldsymbol{\delta}\right)+\sigma^2}.
\end{split}
\end{equation} 
The data rate at link $I^{M}_{n,m}$ is $c_{n,m,t}\left(\boldsymbol{\rho}, \boldsymbol{\delta}\right)=B_N\log\left(1+\delta_{nmt}\gamma_{n,m,t}\left(\boldsymbol{\rho}, \boldsymbol{\delta}\right)\right)$.
%

\vspace{-0.35cm}
\subsection{Satellite Communication Link Data Rate Analysis}
In the Ka band, a LEO satellite provides a supplemental backhaul connection to the BSs.  
The link between the satellite and BS $n$ is represented by $I^{S}_{n}$. 
Let $\boldsymbol{u}={\left[{u}_{11},\cdots, {u}_{NT}\right]}$ be the time allocation vector for the satellite backhaul links, where ${u}_{nt}\in\left\{0, 1\right\}$ represents the time allocation state at link $I^{S}_{n}$ at slot $t$, with ${u}_{nt}=1$ indicating that time slot $t\in\mathcal{T}$ is allocated to satellite backhaul link $I^{S}_{n}$, otherwise, we have $u_{nt}=0$.
The satellite backhaul association vector is given by $\boldsymbol{v}=\left[{v}_{1},\cdots,{v}_{N}\right]$, with ${v}_{n}\in\left\{0, 1\right\}$ indicating the association state between the satellite and BS $n$, where ${v}_{n}=1$ indicates that BS $n$ is associated with the satellite.  
Here, $\boldsymbol{\beta}=\left[\beta_{11},\cdots, \beta_{NT}\right]$ is defined as a satellite backhaul allocation vector, with $\beta_{nt}={u}_{nt}{v}_{n}$ being a single allocation scheme of link $I^{S}_{n}$, and  $\beta_{nt}=1$ indicating that BS $n$ is served by the satellite at time slot $t$. 

The interference over the backhaul link between BS $n$ and the satellite at slot $t\in\mathcal{T}$ is: 

 \begin{equation}\label{eq:interference3}
  \setlength{\abovedisplayskip}{-16 pt}
\setlength{\belowdisplayskip}{-1 pt}
\begin{split}
\Omega_{n,S,t}\left(\boldsymbol{\rho}, \boldsymbol{\delta}\right)=&\!\!\!\sum_{\substack{i\in\mathcal{I}^B}}\rho_{n_i u_i t}P_{n_i}h\left(n_i,n\right)+\sum_{\substack{i\in\mathcal{I}^M}}{\delta}_{ n_i m_i t}P_{m_i}h\left(m_i,n\right).
\end{split}
\end{equation}
The SINR of the backhaul links between BS $n$ and a satellite at time interval $t\in\mathcal{T}$ will be: 
 \begin{equation}\label{eq:2}
\setlength{\abovedisplayskip}{2 pt}
\setlength{\belowdisplayskip}{2 pt}
\begin{split}
\gamma_{n,S,t}\left( \boldsymbol{\rho}, \boldsymbol{\delta}\right)=\frac{P_S h\left(s,n\right)}{\Omega_{n,S,t}\left(\boldsymbol{\rho}, \boldsymbol{\delta}\right)+\Omega_c+\sigma^2},\\
\end{split}
\end{equation} 
where $P_S$ is the transmit power of the LEO satellite, which is assumed to be equal for all satellites in constellation.  Thus, the data rate of link $I^{S}_{n}$ is: 
 \begin{equation}\label{eq:cnst}
\setlength{\abovedisplayskip}{-5 pt}
\setlength{\belowdisplayskip}{-2 pt}
\begin{split}
c_{n,S,t}\left(\boldsymbol{\rho}, \boldsymbol{\delta},\boldsymbol{\beta} \right)=B_S\log\left(1+\beta_{nt}\gamma_{s,n,t}\left(\boldsymbol{\rho}, \boldsymbol{\delta}\right)\right),
\end{split}
\end{equation} 
where $B_S$ is the bandwidth allocated to link $I_{n}^S$, which is assumed to be equal for all satellite backhaul links. 

Here, we assume that within the studied resource stringent ISDN, each BS or MBS will continuously serve users or BSs during the studied duration $\Delta$. In such a case, the interference in (\ref{eq:interference1}) terrestrial link $I^B_{n,u}$ can be rewritten as:
 \begin{equation}\label{eq:reinterference1}
 \setlength{\abovedisplayskip}{-2 pt}
\setlength{\belowdisplayskip}{-2 pt}
\begin{split}
\Omega_{n,u,t}=\sum\limits_{\substack{i\in\mathcal{N}/n\\}}P_{i}h\left(i, u\right)+\sum\limits_{\substack{j\in\mathcal{M}\\}}P_{j}h\left(j, u\right),
\end{split}
\end{equation} 
Similarly, the interference in (\ref{eq:interference2}) at backhaul link $I^M_{nu}$ is rewritten as:

\begin{equation}\label{eq:reinterference2}\small
 \setlength{\abovedisplayskip}{-12 pt}
\setlength{\belowdisplayskip}{-2 pt}
\begin{split}
\Omega_{n,m,t}=&\!\!\sum\limits_{\substack{i\in\mathcal{N}\\i\neq n}}P_{i}h\left(i, n\right)+\!\!\!\!\sum\limits_{\substack{j\in\mathcal{M}/m\\}}\!\!\!\!P_{j}h\left(j, n\right).
\end{split}
\end{equation} 
Moreover, the interference in (\ref{eq:interference3}) at backhaul link $I^S_{n}$ is rewritten as:
 \begin{equation}\label{eq:reinterference3}
  \setlength{\abovedisplayskip}{3 pt}
\setlength{\belowdisplayskip}{3 pt}
\begin{split}
\Omega_{n,S,t}=&\sum\limits_{\substack{i\in\mathcal{N}\\i\neq n}}P_{i}h\left(i, n\right)+\!\!\!\!\sum\limits_{\substack{j\in\mathcal{M}\\}} P_{j}h\left(j, n\right).
\end{split}
\end{equation}
Thus, the data rates at a terrestrial link $I^B_{nu}$ and backhaul links $I^M_{nu}$ and $I^S_{n}$ will only be a function of the allocation scheme $\rho_{nut}$, $\delta_{nmt}$, and $\beta_{nt}$, respectively. 

We also consider QoS requirements at each user in the form of $\overline{c}_{u}\left(\boldsymbol{\rho}\right)=\frac{1}{T}\sum\limits_{t\in\mathcal{T}}\sum\limits_{n\in\mathcal{N}}c_{n,u,t}\left(\boldsymbol{\rho}\right)\ge\Pi_u$, with $\overline{c}_{u}\left(\boldsymbol{\rho}\right)$ being user $u$'s average data rate over the studied duration, $\Pi_u$ being the minimum rate threshold required by user $u$. In other words, the average data rate that user $u$ can achieve over the studied duration should never be less than $\Pi_u$. Meanwhile, each BS $n$'s average backhaul data rate should not be less than threshold $\Pi_n$, that is, $\overline c_{n}\left(\boldsymbol{\delta}, \boldsymbol{\beta}\right)=\frac{1}{T}\sum\limits_{t\in\mathcal{T}}\left(\sum\limits_{m\in\mathcal{M}}c_{n,m,t}\left(\boldsymbol{\delta}\right)+c_{n,S,t}\left(\boldsymbol{\beta}\right)\right)\ge\Pi_n$.

\vspace{-0.16cm}

\section{Problem Formulation and Proposed Resource Allocation Framework}
\subsection{Problem Formulation}
Given the system model of Section II, our goal is to find an effective resource allocation and user association scheme that maximizes the system's data rates, while satisfying the user data service request, under a limited available time resource. However, we note that the resource allocation and association scheme at one user or BS can affects the entire network performance. In particular, the optimal allocation and association scheme of any given user or BS could yield significant performance degradation at other communication links by introducing interference to these victim links. This directly leads to a decrease in the amount of data provided by these links which, consequently, cannot meet the data service demand from the users. Moreover, while optimizing its own data rate, a user could continuously occupy a BS without considering its time budget. 
In order to solve the joint access and resource allocation problem among the various communication links in the resource-limited ISDN, we formulate a \emph{competitive market} \cite{arrow1974general} which can take into account the QoS needs of every user, BS, MBS, and, also, the satellite. In such a market, the BSs, MBSs and the satellite submit quotations (or prices) for each slot of service. 
Meanwhile, the users and BSs find the resource allocation and association scheme that is most beneficial for them, that is, the scheme with highest data rate but lowest charged price. By using a competitive market solution, the intractable many-to-many matching problem, that is, the resource-limited joint access and allocation problem in the ISDN, can be effectively solved. The problem is solved in a distributed way to prevent excessive overhead and alleviate privacy concerns.




In this model, the primary objective of each user is to find a resource allocation and association scheme that satisfies its data service demand and maximize its data rate, while each BS seeks to find an resource allocation scheme that can provide sufficient data service under the limited time budget. Meanwhile,  each BS also seeks to find a resource allocation and association scheme that satisfies its backhaul service demand with maximized data rate. Each MBS, or the satellite, seeks to find a resource allocation scheme that can meet the backhaul service demand within their resource budget.
To capture the decision making processes of the users, BSs, MBSs, and the satellite,  we introduce a competitive market defined by the tuple $\left[\mathcal{P}, \boldsymbol{\rho}, \boldsymbol{\delta},\boldsymbol{\beta}, \boldsymbol{\theta}, \boldsymbol{\varphi}, \boldsymbol{\varepsilon}, \mathcal{W},\mathcal{V}\right]$ to capture the dependence between the supply and the demand of data service. Here,
\begin{itemize}
\item $\mathcal{P}$ is the set of users, BSs, MBSs and the satellite. 

\item $\boldsymbol{\rho}$, $ \boldsymbol{\delta}$, and $\boldsymbol{\beta}$ are the vectors of allocation schemes deployed at BSs, MBSs, and the satellite, respectively.

\item $\boldsymbol{\theta}=\left[\theta_{111},\cdots,\theta_{NUT}\right]$ is the request schemes deployed at users,  whereby $\theta_{nut}=1$ implies that user $u$ requests data service from BS $n$ at slot $t$, otherwise, if $\theta_{nut}=0$, then user $u$ does not request service from BS $n$ at slot $t$.
$\boldsymbol{\varphi}=\left[\varphi_{111},\cdots,\varphi_{NMT}\right]$, and $\boldsymbol{\varepsilon}=\left[\varepsilon_{11},\cdots,\varepsilon_{NT}\right]$ are request vectors at each BS,  such that $\varphi_{nmt}=1$ and $\varepsilon_{nt}=1$ represent that, at slot $t$, BS $n$ requests backhaul service from MBS $m$, and the satellite, respectively.

\item $\mathcal{W}$ is the set of payoffs of the \emph{buyers}, which are the users in the radio access links and BSs in the backhaul links.

\item $\mathcal{V}$ is the set of payoffs of the \emph{sellers}, which are the BSs in the radio access links, MBSs and the satellite in the backhaul links.
\end{itemize}

Note that, in our competitive market,  user $u$ will pay $\sum\limits_{t\in\mathcal{T}}q_{u,n,t}{\rho}_{nut}$ to seller BS $n$, BS $n$ will pay $\sum\limits_{t\in\mathcal{T}}q_{n,m,t}{\delta}_{nmt}$ to MBS $m$ and pay $\sum\limits_{t\in\mathcal{T}}q_{n,S,t}\beta_{nt}$ to the satellite, as a return for the provided service. Here, $q_{i,j,t}$ is the cost of the time slot $t$ seller $j$ received from buyer $i$, noted as unit price. Note that, for each buyer, the unit prices of time slots provided by different sellers will vary, as the communication quality provided by different sellers varies. In particular, the buyer will have to pay a higher price to obtain a higher quality service as captured by the data rates in the ISDN system.


Each buyer will use a utility function to measure its profit in the market, which is defined as the normalized difference between its communication quality (i.e. its data rate) and the price it pays to the sellers. Thus, the payoff function of user $u$ can will be: 
 \begin{equation}\label{eq:pu}
\setlength{\abovedisplayskip}{2 pt}
\setlength{\belowdisplayskip}{3 pt}
\begin{split}
W_{u}\left(\boldsymbol{\theta}  \right)=&\frac{1}{T\Pi_u}\sum\limits_{n\in\mathcal{N}}\sum\limits_{t\in\mathcal{T}} {c}_{n,u,t}\left(\boldsymbol{\theta} \right)-\frac{1}{T}\sum\limits_{n\in\mathcal{N}}\sum\limits_{t\in\mathcal{T}}q_{u,n,t}{\theta}_{nut}.
\end{split}
\end{equation} 

The payoff function of BS $n$, as a buyer, is given by:
 \begin{equation}\label{eq:pn2}
\setlength{\abovedisplayskip}{2 pt}
\setlength{\belowdisplayskip}{3 pt}
\begin{split}
W_{n}\left(\boldsymbol{\varphi},\boldsymbol{\varepsilon} \right)&=\frac{1}{T\Pi_n}\sum\limits_{m\in\mathcal{M}}\sum\limits_{t\in\mathcal{T}} {c}_{n,m,t}\left(\boldsymbol{\varphi}\right)+\frac{1}{T\Pi_n}\sum\limits_{t\in\mathcal{T}} c_{n,S,t}\left(\boldsymbol{\varepsilon} \right)\\
&-\frac{1}{T}\sum\limits_{m\in\mathcal{M}}\sum\limits_{t\in\mathcal{T}}q_{n,m,t}\varphi_{nmt}-\frac{1}{T}\sum\limits_{t\in\mathcal{T}}q_{n,S,t}\varepsilon_{nt}.
\end{split}
\end{equation} 
On the other hand, seller $j$ will receive its payment under the cost $P_j A_j \tau$, which is the energy it consumed during the data transmission process. Here, $P_j$ is the transmit power of executor $j$. $A_j$ is the total number of time slots that seller $j$ has allocated during the whole process. 
Thus, the payoff of BS $n$ as an seller is given by:
 \begin{equation}\label{eq:pn}
\setlength{\abovedisplayskip}{3 pt}
\setlength{\belowdisplayskip}{5 pt}
\begin{split}
V_{n}\left(\boldsymbol{\rho} \right)&=\frac{1}{T}\sum\limits_{u\in\mathcal{U}}\sum\limits_{t\in\mathcal{T}}q_{u,n,t}{\rho}_{nut}-\frac{1}{T}\sum\limits_{u\in\mathcal{U}}\sum\limits_{t\in\mathcal{T}}\rho_{nut}.
\end{split}
\end{equation} 
Note that, the energy consumption of BS $n$, $P_n\sum\limits_{u\in\mathcal{U}}\sum\limits_{t\in\mathcal{T}}\rho_{nut}\tau$, is normalized by $P_n\tau T$, such that the normalized cost of BS $n$ is $\frac{1}{T}\sum\limits_{u\in\mathcal{U}}\sum\limits_{t\in\mathcal{T}}\rho_{nut}$. 
Similarly, the payoff function of seller $m\in\mathcal{M}$ will be:
 \begin{equation}\label{eq:pm}
\setlength{\abovedisplayskip}{4 pt}
\setlength{\belowdisplayskip}{4 pt}
\begin{split}
V_{m}&\left(\boldsymbol{\delta} \right)=\frac{1}{T}\sum\limits_{n\in\mathcal{N}}\sum\limits_{t\in\mathcal{T}}q_{n,m,t}{\delta}_{nmt}-\frac{1}{T}\sum\limits_{n\in\mathcal{N}}\sum\limits_{t\in\mathcal{T}}\delta_{nmt}.
\end{split}
\end{equation} 
The payoff function of the satellite, as a seller, is:
 \begin{equation}\label{eq:ps}
\setlength{\abovedisplayskip}{4 pt}
\setlength{\belowdisplayskip}{4 pt}
\begin{split}
V_{S}&\left(\boldsymbol{\beta} \right)=\frac{1}{T}\sum\limits_{n\in\mathcal{N}}\sum\limits_{t\in\mathcal{T}}q_{n,S,t}{\beta}_{nt}-\frac{1}{T}\sum\limits_{n\in\mathcal{N}}\sum\limits_{t\in\mathcal{T}}\beta_{nt}.
\end{split}
\end{equation} 

In the formulated market, each one of the buyer or seller, i.e., an user, BS, MBS, or the satellite will seek to maximize its own payoff. 
To solve this market, we make use of the concept of \emph{Walrasian equilibrium}\cite{arrow1974general}. At this equilibrium, the supply of data and backhaul service from the sellers 
exactly matches the demand of data and backhaul service at the buyers, with each participants' payoff being maximized at the same time. The concept of a Walrasian equilibrium is formally defined as:

\vspace{-0.3cm}
\begin{definition}
\emph{Given the price of the goods (i.e. the data service and the backhaul service), a \emph{Walrasian equilibrium} in the considered ISDN competitive market is given by a tuple of vectors $\left[\boldsymbol{\rho}^*, \boldsymbol{\delta}^*, \boldsymbol{\beta}^*, \boldsymbol{\theta}^{*}, \boldsymbol{\varphi}^*, \boldsymbol{\varepsilon}^*\right]$ \iffalse$\boldsymbol{x}^*$,  $\boldsymbol{y}^*$\fi that satisfy the following conditions:}

\vspace{-0.5cm}
    \emph{\begin{enumerate}
\item {For radio access links, $\boldsymbol{\theta}^{*}$ is the optimal solution that maximizes the users' payoff in (\ref{eq:pu}) and  $\boldsymbol{\rho}^*$ is the optimal solution that maximizes the payoff  (\ref{eq:pn}),  of BSs. }
\item {For backhaul links, $\boldsymbol{\varphi}^*$, $\boldsymbol{\varepsilon}^*$ are the solutions that optimize the BSs' payoff in (\ref{eq:pn2}), while $\boldsymbol{\delta}^*$ and $ \boldsymbol{\beta}^*$ optimize the MBSs' and the satellite's payoffs in (\ref{eq:pm}) and (\ref{eq:ps}), respectively. }
\item {The time slots provided by each seller are exactly matched to the number of time slots requested by the buyers. In other words, $\boldsymbol{\rho}^*=\boldsymbol{\theta}^{*}$, $\boldsymbol{\delta}^*=\boldsymbol{\varepsilon}^*$, $\boldsymbol{\beta}^*=\boldsymbol{\varepsilon}^*$.} 
\end{enumerate}}

\end{definition}
\vspace{-0.2cm}
At the Walrasian equilibrium,  the communication links' optimal data rates are guaranteed and the {clearance of the market} is ensured by a full utilization of the time resource in the system, with which payoffs (\ref{eq:pu})-(\ref{eq:ps}) in the ISDN are optimized\cite{arrow1974general}. 
However, improper prices can lead to the mismatch between the time slots provided by each seller and the number of time slots requested by the corresponding buyer. For example, when the prices are low, the users and BSs will be willing to request as much time slots as possible, yet the BSs, MBS, and satellite could rather not to provide any service at all. As such, a set of optimal prices should yield schemes that fully optimize the individual payoffs (\ref{eq:pu})-(\ref{eq:ps}) in the system without violating any resource budget and resource demand constraints. In practical ISDN scenarios, each user, BS, MBS, or the satellite may not have the ability to analytically characterize such optimal prices by themselves.
To this end, a heavy ball based algorithm, which enables a numerical computation of such optimal prices, is proposed next, which as a result, allows computation of Walrasian equilibrium at which both the payoffs at each devices and the total payoff in the market is reached.

\vspace{-0.46cm}

\subsection{Proposed Resource Allocation Algorithm}


In this section, the problem of joint access and backhaul resource allocatin is solved in the formulated market which considers the payoffs of all the users, BSs, MBSs, and the satellite at the same time. The constraints, including resource budget, service demand, and QoS requirements, is considered in the market. 
Here, the total payoff gained by all the users, BSs, MBSs and the satellite in the market is given as:

 \begin{equation}\label{eq:obj}
\setlength{\abovedisplayskip}{-16 pt}
\setlength{\belowdisplayskip}{0 pt}
\begin{split}
J\left(\boldsymbol{\rho}, \boldsymbol{\delta},\boldsymbol{\beta}, \boldsymbol{\theta}, \boldsymbol{\varphi},\boldsymbol{\varepsilon}\right)=&\sum\limits_{u\in\mathcal{U}}W_{u}\left(\boldsymbol{\theta} \right)+\sum\limits_{n\in\mathcal{N}}\left(V_{n}\left(\boldsymbol{\rho} \right)+W_{n}\left( \boldsymbol{\varphi},\boldsymbol{\varepsilon} \right)\right)+\sum\limits_{m\in\mathcal{M}}V_{m}\left(\boldsymbol{\delta} \right)+V_{S}\left(\boldsymbol{\beta} \right). 
\end{split}
\end{equation} 
As such, the joint access and backhaul resource allocation problem is formulated as:

\addtocounter{equation}{0}
\setlength{\abovedisplayskip}{-5 pt}
\begin{equation}\label{eq:opt}
\max_{\boldsymbol{\rho}, \boldsymbol{\delta}, \boldsymbol{\beta},\boldsymbol{\theta}, \boldsymbol{\varphi},\boldsymbol{\varepsilon}}J\left(\boldsymbol{\rho}, \boldsymbol{\delta},\boldsymbol{\beta},\boldsymbol{\theta}, \boldsymbol{\varphi},\boldsymbol{\varepsilon} \right),
\end{equation}
\vspace{-0.4cm}
\begin{align}\label{c1}
\setlength{\abovedisplayskip}{-5 pt}
\setlength{\belowdisplayskip}{0 pt}
&\;\;\;\;\rm{s.\;t.}\scalebox{1}{$\;\;\; \sum\limits_{n\in\mathcal{N}}\sum\limits_{t\in\mathcal{T}}{c}_{n,u,t}\left(\boldsymbol{\theta}\right)\tau\ge C_u, u\in\mathcal{U},$}   \tag{\theequation a}\\
&\scalebox{1}{$\;\;\;\;\;\;\;\;\;\;\; \sum\limits_{\substack{u\in\mathcal{U}}}\rho_{nut}+\sum\limits_{\substack{m\in\mathcal{M}}}\varphi_{nmt}+\varepsilon_{nt}\le1, t\in\mathcal{T},$}   \tag{\theequation b}\\
&\scalebox{1}{$\;\;\;\;\;\;\;\;\;\;\;\sum\limits_{\eta=1}^{t}\left(\sum\limits_{u\in\mathcal{U}}{c}_{n,u,\eta}\left(\boldsymbol{\rho}\right)-\sum\limits_{m\in\mathcal{M}}{c}_{n,m,\eta}\left( \boldsymbol{\varphi}\right)-{c}_{n,S,\eta}\left(\boldsymbol{\varepsilon}\right)\right)\le 0, n\in\mathcal{N},$}   \tag{\theequation c}\\
&\scalebox{1}{$\;\;\;\;\;\;\;\;\;\;\; 0\le\sum\limits_{\substack{u\in\mathcal{U}}} \sum\limits_{\substack{t\in\mathcal{T}}}\rho_{nut}\le T_n, n\in\mathcal{N}, $} \tag{\theequation d}\\
&\scalebox{1}{$\;\;\;\;\;\;\;\;\;\;\; 0\le\sum\limits_{\substack{n\in\mathcal{N}}}\sum\limits_{\substack{t\in\mathcal{T}}}\delta_{nmt}\le T, m\in\mathcal{M}, $} \tag{\theequation e}\\
&\scalebox{1}{$\;\;\;\;\;\;\;\;\;\;\; 0\le\sum\limits_{\substack{n\in\mathcal{N}}}\sum\limits_{\substack{t\in\mathcal{T}}}\beta_{nt}\le T,$}   \tag{\theequation f}\\
&\scalebox{1}{$\;\;\;\;\;\;\;\;\;\;\; 0\le\sum\limits_{\substack{n\in\mathcal{N}}}\theta_{nut}\le 1, u\in\mathcal{U}, t\in\mathcal{T},$} \tag{\theequation g}\\
&\scalebox{1}{$\;\;\;\;\;\;\;\;\;\;\; 0\le\sum\limits_{\substack{m\in\mathcal{M}}}\varphi_{nmt}+\varepsilon_{nt}\le 1, n\in\mathcal{N}, t\in\mathcal{T},$} \tag{\theequation h}\\
&\scalebox{1}{$\;\;\;\;\;\;\;\;\;\;\; \sum\limits_{\substack{n\in\mathcal{N}}}\delta_{nmt}\le1, m\in\mathcal{M}, t\in\mathcal{T}, $} \tag{\theequation i}\\
&\scalebox{1}{$\;\;\;\;\;\;\;\;\;\;\; \sum\limits_{\substack{n\in\mathcal{N}}}\beta_{nt}\le1,t\in\mathcal{T},$}   \tag{\theequation j}\\
&\scalebox{1}{$\;\;\;\;\;\;\;\;\;\;\; \rho_{nut}= 0, n\in\mathcal{N}, u\in\mathcal{U}, t\in\mathcal{T}, t >T_n$}   \tag{\theequation k}\\
&\scalebox{1}{$\;\;\;\;\;\;\;\;\;\;\; \varphi_{nmt}= 0, n\in\mathcal{N},m\in\mathcal{M}, t\in\mathcal{T}, t >T_n$} \tag{\theequation l}\\
&\scalebox{1}{$\;\;\;\;\;\;\;\;\;\;\; \varepsilon_{nt}= 0, n\in\mathcal{N}, t\in\mathcal{T}, t >T_n$} \tag{\theequation m}\\
&\scalebox{1}{$\;\;\;\;\;\;\;\;\;\;\; \overline c_{u}\left(\boldsymbol{\theta}\right)\ge\Pi_u,  u\in \mathcal {U}$} \tag{\theequation n}\\
&\scalebox{1}{$\;\;\;\;\;\;\;\;\;\;\; \overline c_{n}\left(\boldsymbol{\varphi},\boldsymbol{\varepsilon}\right)\ge\Pi_n,  n\in \mathcal {N}$} \tag{\theequation o}\\
&\scalebox{1}{$\;\;\;\;\;\;\;\;\;\;\;  {\rho}_{nut},  {\theta}_{nut}\in \left\{0,1\right\},  n\in\mathcal {N}, u\in \mathcal {U}, t\in \mathcal{T},$}   \tag{\theequation p}\\
&\scalebox{1}{$\;\;\;\;\;\;\;\;\;\;\; {\delta}_{nmt}, {\varphi}_{nmt}\in \left\{0,1\right\},   m\in \mathcal {M}, n\in\mathcal {N}, t\in \mathcal{T},$}   \tag{\theequation q}\\
&\scalebox{1}{$\;\;\;\;\;\;\;\;\;\;\; {\beta}_{nt}, {\varepsilon}_{nt}\in \left\{0,1\right\},   n\in\mathcal {N}, t\in \mathcal{T}, $}   \tag{\theequation r}\\
&\scalebox{1}{$\;\;\;\;\;\;\;\;\;\;\; \boldsymbol{\rho}=\boldsymbol{\theta}, \boldsymbol{\delta}=\boldsymbol{\varepsilon},   \boldsymbol{\beta}=\boldsymbol{\varepsilon}, $}   \tag{\theequation s}
\end{align}
where the maximization problem in ({\ref{eq:opt}}) captures the total payoff maximization objective in the market, with ({\ref{eq:opt}}a) indicating that the data service demand of each user, $C_u$, must be satisfied. 
Meanwhile, ({\ref{eq:opt}}b) indicates that one time slot cannot be simultaneously used for backhaul links and radio access links at a BS $n$, so as to avoid interference between the backhaul and radio access links. ({\ref{eq:opt}}c) indicates that the data service each BS can provide must not exceed the capacity of the backhaul of this BS.
({\ref{eq:opt}}d)-({\ref{eq:opt}}f) indicate that BSs, MBSs, and the satellite must operate under time limitations. ({\ref{eq:opt}}g) and ({\ref{eq:opt}}h) capture the fact that each user or BS can only be serviced at most once in each studied time slot. 
({\ref{eq:opt}}i) and ({\ref{eq:opt}}j) captures the fact that, each BS or MBS will serve at most one terrestrial user or BS at each time slot within its effective time. ({\ref{eq:opt}}k)-({\ref{eq:opt}}m) capture the fact that each BS can neither receive nor transmit data out of its effective time. ({\ref{eq:opt}}n) and ({\ref{eq:opt}}o) capture the QoS requirements in the ISDN. ({\ref{eq:opt}}p)-({\ref{eq:opt}}r) indicate that ({\ref{eq:opt}}) is a binary problem. ({\ref{eq:opt}}s) captures the market clearance constraints, with which the data service provided by each BS, MBS, and the satellite is consumed by the users and BSs.
Notice that, under the market clearance constraint ({\ref{eq:opt}}s), the payments from all buyers are returned to the sellers. Hence, the objective function can be simplified as:
 \begin{equation}\label{eq:obj}
\setlength{\abovedisplayskip}{5 pt}
\setlength{\belowdisplayskip}{3 pt}
\begin{split}
J\left(\boldsymbol{\rho}, \boldsymbol{\delta},\boldsymbol{\beta}, \boldsymbol{\theta}, \boldsymbol{\varphi},\boldsymbol{\varepsilon} \right)&=\frac{1}{T\Pi_u}\sum\limits_{u\in\mathcal{U}}\sum\limits_{n\in\mathcal{N}}\sum\limits_{t\in\mathcal{T}}{c}_{n,u,t}\left({\theta}_{nut}\right)-\frac{1}{T}\sum\limits_{n\in\mathcal{N}}\sum\limits_{u\in\mathcal{U}}\sum\limits_{t\in\mathcal{T}}\rho_{nut}\\
&+\sum\limits_{n\in\mathcal{N}}\left[\frac{1}{T\Pi_n}\sum\limits_{m\in\mathcal{M}}\sum\limits_{t\in\mathcal{T}} {c}_{n,m,t}\left({\varphi}_{nmt}\right)+\frac{1}{T\Pi_n}\sum\limits_{t\in\mathcal{T}}c_{n,S,t}\left({\varepsilon}_{nt} \right)\right]\\
&-\frac{1}{T}\sum\limits_{m\in\mathcal{M}}\sum\limits_{n\in\mathcal{N}}\sum\limits_{t\in\mathcal{T}} {\delta}_{nmt}-\frac{1}{T}\sum\limits_{n\in\mathcal{N}}\sum\limits_{t\in\mathcal{T}}{\beta}_{nt}.
\end{split}
\end{equation} 

 \subsection{Dual decomposition}
The integer non-linear programming (INLP) optimization problem in ({\ref{eq:opt}}) can be solved using a dual decomposition method. This stems from the fact that constraints ({\ref{eq:opt}}a)-({\ref{eq:opt}}r) are local constraints at different users, BSs, MBSs, and the satellite, which means that each of these constraints will only set boundaries to one user, BS, MBS or the satellite.  For example, constraints ({\ref{eq:opt}}a), ({\ref{eq:opt}}g), and ({\ref{eq:opt}}n) are local constraints for the terrestrial users, ({\ref{eq:opt}}b)-({\ref{eq:opt}}d), ({\ref{eq:opt}}h), ({\ref{eq:opt}}k)-({\ref{eq:opt}}m), and ({\ref{eq:opt}}o) are local constraints at the BSs, ({\ref{eq:opt}}e), ({\ref{eq:opt}}i) are local constraints at MBSs, and ({\ref{eq:opt}}f), ({\ref{eq:opt}}j)  are local constraints at the satellite. ({\ref{eq:opt}}p)-({\ref{eq:opt}}r) are local constraints at different user, BS, MBS, and the satellite, indicating that the allocation schemes are binaries.  Note that, hereinafter, constraints ({\ref{eq:opt}}a)-({\ref{eq:opt}}r) are called local constraints. Note that, solving ({\ref{eq:opt}}) with these local constraints introduces problems including increased amount of information exchanged in the ISDN and also the potential privacy and security crisis. In such a case, the Lagrangian function of the objective function in ({\ref{eq:opt}}) is formulated without considering ({\ref{eq:opt}}a)-({\ref{eq:opt}}r) as:
 \begin{equation}\label{eq:L1}
\setlength{\abovedisplayskip}{2 pt}
\setlength{\belowdisplayskip}{3 pt}
\begin{split}
L=&\frac{1}{T\Pi_u}\sum\limits_{u\in\mathcal{U}}\sum\limits_{n\in\mathcal{N}}\sum\limits_{t\in\mathcal{T}} {c}_{n,u,t}\left({\theta}_{nut}\right)-\frac{1}{T}\sum\limits_{n\in\mathcal{N}}\sum\limits_{u\in\mathcal{U}}\sum\limits_{t\in\mathcal{T}}\rho_{nut}+\sum\limits_{n\in\mathcal{N}}\sum\limits_{u\in\mathcal{U}}\sum\limits_{t\in\mathcal{T}}\lambda_{n,u,t}\left({\rho}_{nut}-\theta_{nut}\right)\\
&+\sum\limits_{n\in\mathcal{N}}\left[\frac{1}{T\Pi_n}\sum\limits_{m\in\mathcal{M}}\sum\limits_{t\in\mathcal{T}} {c}_{n,m,t}\left({\varphi}_{nmt}\right)+\frac{1}{T\Pi_n}\sum\limits_{t\in\mathcal{T}} c_{n,S,t}\left({\varepsilon}_{nt}\right)\right]-\frac{1}{T}\sum\limits_{m\in\mathcal{M}}\sum\limits_{n\in\mathcal{N}}\sum\limits_{t\in\mathcal{T}}\delta_{nmt}\\&-\frac{1}{T}\sum\limits_{n\in\mathcal{N}}\sum\limits_{t\in\mathcal{T}}\beta_{nt}
+\sum\limits_{n\in\mathcal{N}}\sum\limits_{m\in\mathcal{M}}\sum\limits_{t\in\mathcal{T}}\varsigma _{n,m,t}\left(\delta_{nmt}-\varphi_{nmt}\right)+\sum\limits_{n\in\mathcal{N}}\sum\limits_{t\in\mathcal{T}}\xi _{n,t}\left(\beta_{nt}-\varepsilon_{nt}\right). 
\end{split}
\end{equation} 
Here, $\lambda_{n,u,t}$, $\varsigma _{n,u,t}$, and $\xi _{n,t}$ are the dual variables.

Meanwhile, since the data rate functions $c_{n,u,t}\left({\theta}_{nut}\right)$, $c_{n,m,t}\left({\varphi}_{nmt}\right)$, and $c_{n,S,t}\left({\varepsilon}_{nt}\right)$ are independent, one can obtain the augmented Lagrangian function transformed into a form $L=\sum\limits_{u\in\mathcal{U}}L_u+\sum\limits_{n\in\mathcal{N}}L_n+\sum\limits_{m\in\mathcal{M}}L_m+L_S$. Here, $L_u$ is the Lagrangian function at user $u$, such that user $u\in\mathcal{U}$ is obligated to solve a maximization problem in the form:
\begin{equation}\label{eq:optu}
\setlength{\abovedisplayskip}{2 pt}
\setlength{\belowdisplayskip}{0 pt}
\begin{split}
&D_u\left(\boldsymbol{\theta}_u,\boldsymbol{\lambda}_u\right)=\max_{\boldsymbol{\theta}_u\in{X}_{u}} L_u \left(\boldsymbol{\theta}_u,\boldsymbol{\lambda}_u\right)=\max_{\boldsymbol{\theta}_u}\left\{\frac{1}{T\Pi_u}\sum\limits_{n\in\mathcal{N}}\sum\limits_{t\in\mathcal{T}} {c}_{n,u,t}\left({\theta}_{nut}\right)-\sum\limits_{n\in\mathcal{N}}\sum\limits_{t\in\mathcal{T}}\lambda_{n,u,t}{\theta}_{nut}\right\},
\end{split}
\end{equation}
where $\mathcal{X}_{u}=\left\{\boldsymbol{\theta}_u: ({\ref{eq:opt}}a), ({\ref{eq:opt}}g), ({\ref{eq:opt}}n), \theta_{nut}\in\left\{0,1\right\}, \forall n\in\mathcal{N}, t\in\mathcal{T}\right\}$ includes the local constraints at user $u$. Here, $\boldsymbol{\theta}_{u}=\left[{\theta}_{1u1},\cdots,{\theta}_{1uT},\cdots, {\theta}_{Nu1},\cdots,{\theta}_{NuT}\right]$ is the vector of data request schemes of user $u$, with ${\theta}_{nut}\in\left\{0,1\right\}$. Note that, ${\theta}_{nut}=1$ implies that user $u$ requests data service from BS $n$ at time slot $t$. $\boldsymbol{\lambda}_{u}=\left[{\lambda}_{1u1},\cdots,{\lambda}_{1uT},\cdots, {\lambda}_{Nu1},\cdots,{\lambda}_{NuT}\right]$ is the vector of dual variables related to user $u$. Also, note that $D_u\left(\boldsymbol{\theta}_u,\boldsymbol{\lambda}_u\right)$ is the dual function at user $u$.
$L_n$ is the Lagrangian function at BS $n$, such that the optimization problem at BS $n$ is defined as follows:
\begin{equation}\label{eq:optn}
\begin{split}
D_n&\left(\boldsymbol{\varphi}_n,\boldsymbol{\varepsilon}_n,\boldsymbol{\rho}_n, \boldsymbol{\lambda}_n,\boldsymbol{\xi}_n,\boldsymbol{\varsigma}_n\right)=\max_{\boldsymbol{\varphi}_n,\boldsymbol{\varepsilon}_n,\boldsymbol{\rho}_n, \in\mathcal{X}_{n}} L_n\left(\boldsymbol{\varphi}_n,\boldsymbol{\varepsilon}_n,\boldsymbol{\rho}_n, \boldsymbol{\lambda}_n,\boldsymbol{\xi}_n,\boldsymbol{\varsigma}_n\right) \\
&=\max_{\boldsymbol{\varphi}_n,\boldsymbol{\varepsilon}_n,\boldsymbol{\rho}_n, \in\mathcal{X}_{n}}\frac{1}{T\Pi_n}\sum\limits_{m\in\mathcal{M}}\sum\limits_{t\in\mathcal{T}} {c}_{n,m,t}\left({\varphi}_{n,m,t}\right)+\frac{1}{T\Pi_n}\sum\limits_{t\in\mathcal{T}} c_{n,S,t}\left({\varepsilon}_{nt} \right)\\
&-\sum\limits_{m\in\mathcal{M}}\sum\limits_{t\in\mathcal{T}}\varsigma_{n,m,t}{\varphi}_{nmt}-\sum\limits_{t\in\mathcal{T}}\xi_{n,S,t}{\varepsilon}_{nt}+\sum\limits_{u\in\mathcal{U}}\sum\limits_{t\in\mathcal{T}}\lambda_{n,u,t}{\rho}_{nut}-\frac{1}{T}\sum\limits_{u\in\mathcal{U}}\sum\limits_{t\in\mathcal{T}}{\rho}_{n,u,t},
\end{split}
\end{equation}
where $D_n\left(\boldsymbol{\varphi}_n,\boldsymbol{\varepsilon}_n,\boldsymbol{\rho}_n, \boldsymbol{\lambda}_n,\boldsymbol{\xi}_n,\boldsymbol{\varsigma}_n\right)$ is called the dual function at BS $n$.
Also, note that $\mathcal{X}_{n}=\small{\left\{\boldsymbol{\varphi}_n,\boldsymbol{\varepsilon}_n,\boldsymbol{\rho}_n: ({\ref{eq:opt}}b)-({\ref{eq:opt}}d), ({\ref{eq:opt}}h), ({\ref{eq:opt}}k)-({\ref{eq:opt}}m), ({\ref{eq:opt}}o), {\varphi}_{n,m,t}, {\varepsilon}_{nt}, {\rho}_{nut}\in\left\{0,1\right\}, \forall m\in\mathcal{M},u\in\mathcal{U}, t\in\mathcal{T}\right\}}$ includes all the local constraints at BS $n$. Here, $\boldsymbol{\varphi}_{n}=\left[{\varphi}_{n11},\cdots,{\varphi}_{n1T},\cdots,{\varphi}_{nM1},\cdots,{\varphi}_{nMT}\right]$, and $\boldsymbol{\varepsilon}_{n}=\left[{\varepsilon}_{n1},\cdots,{\varepsilon}_{nT}\right]$ are vectors of the data request schemes at BS $n$, with ${\varphi}_{nmt}=1$ indicating that BS $n$ requests data service from BS $m$, ${\varepsilon}_{nt}=1$ indicating that BS $n$ requests data service from the satellite, at time slot $t$.
$\boldsymbol{\rho}_{n}=\left[{\rho}_{n11},\cdots,{\rho}_{n1T},\cdots, {\rho}_{nU1},\cdots,{\rho}_{nUT}\right]$ is the allocation scheme vector at BS $n$ in the studied duration. $\boldsymbol{\lambda}_{n}=\left[{\lambda}_{n11},\cdots,{\lambda}_{n1T},\cdots, {\lambda}_{nU1},\cdots,{\lambda}_{nUT}\right]$, $\boldsymbol{\varsigma}_{n}=\left[{\varsigma}_{n11},\cdots,{\varsigma}_{n1T},\cdots, {\varsigma}_{nU1},\cdots,{\varsigma}_{nUT}\right]$, and $\boldsymbol{\xi}_{n}=\left[{\xi}_{n1},\cdots,{\xi}_{nT}\right]$ are the vectors of dual variables related to BS $n$. { Note that, instead of calculating the data rate it can receive from the satellite at slot $t$ by tracking the location of the satellite, a BS $n$ can estimate the data rate based on Proposition \ref{pp1}. 
\begin{proposition}\label{pp1}\emph{
Given an allocation vector $\boldsymbol{\beta}$, the variation of backhaul data rate $c_{n,S,t}$ caused by the movement of the LEO satellite is: 
 \begin{equation}\label{eq:pp2}
\setlength{\abovedisplayskip}{2 pt}
\setlength{\belowdisplayskip}{3 pt}
\begin{split}
\tilde c_{n,S,t} \approx \frac{-\alpha \log_2 10 \beta_{nt}B_SP_SL_1Q_{MM}Q_{RM} 10^{-\frac{0.1\alpha'+0.1\alpha\log_{10}d_{n,S}\left(t\right)+0.1\chi}{10}}\left(\upsilon ^2t+\left(x_0-x_{2,n}\right)\upsilon\right)\tau^2}{10\Omega_{n,S,t}d^2_{n,S}\left(t\right)\left(1+ 10^{-\frac{0.1\alpha'+0.1\alpha\log_{10}d_{n,S}\left(t\right)+0.1\chi}{10}}\right)}.
\end{split}
\end{equation} }
\end{proposition}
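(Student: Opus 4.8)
The plan is to treat $c_{n,S,t}$ as a composite function of the single time-varying quantity $d_{n,S}(t)$ and to take its first-order (differential) approximation over one slot. The key observation is that, once the allocation vector $\boldsymbol{\beta}$ is fixed, every quantity entering $c_{n,S,t}$ in \eqref{eq:cnst} is independent of the satellite's position: the transmit power $P_{S}$, the main-lobe gains $Q_{MM},Q_{RM}$ (the satellite link is LoS, so $G_{M}=Q_{MM}$, $G_{R}=Q_{RM}$), the Rician coefficient $L_{1}$, the shadowing realization $\chi$, and the terrestrial interference $\Omega_{n,S,t}$ of \eqref{eq:reinterference3}. Only the propagation distance $d_{n,S}(t)$ drifts as the LEO satellite moves. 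Hence $c_{n,S,t}=\Phi\!\big(d_{n,S}(t)\big)$ for a fixed smooth map $\Phi$, and the ``variation'' $\tilde c_{n,S,t}$ is the linearization $\tilde c_{n,S,t}\approx\Phi'\!\big(d_{n,S}(t)\big)\,\dot d_{n,S}(t)$.

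First I would parametrize the distance. Since the satellite starts at $\boldsymbol{s}_{0}=(x_{0},y_{0},f_{0})$ and moves along the $x$-axis at speed $\upsilon$, at slot $t$ its coordinates are $(x_{0}+\upsilon t\tau,\,y_{0},\,f_{0})$, so
\begin{equation*}
d_{n,S}^{2}(t)=\big(x_{0}+\upsilon t\tau-x_{2,n}\big)^{2}+\big(y_{0}-y_{2,n}\big)^{2}+\big(f_{0}-f_{2,n}\big)^{2}.
\end{equation*}
Differentiating in $t$ gives $\tfrac{d}{dt}d_{n,S}^{2}(t)=2\upsilon\tau\big(x_{0}+\upsilon t\tau-x_{2,n}\big)$; expanding the bracket is where the term $\upsilon^{2}t+(x_{0}-x_{2,n})\upsilon$ (together with the powers of $\tau$) in the numerator of \eqref{eq:pp2} comes from, and dividing by $2d_{n,S}(t)$ yields $\dot d_{n,S}(t)$ and contributes one power of $d_{n,S}(t)$ to the denominator.

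Next I would differentiate the rate through the channel model. Writing $c_{n,S,t}=B_{S}\log\!\big(1+\beta_{nt}\gamma_{n,S,t}\big)$ with $\gamma_{n,S,t}=P_{S}h(s,n)/(\Omega_{n,S,t}+\Omega_{c}+\sigma^{2})$ and, from \eqref{eq:channel}, $h(s,n)=L_{1}Q_{MM}Q_{RM}\,10^{-L_{2}(d_{n,S})/10}$ with $L_{2}(d)=\alpha'+\alpha\log_{10}d+\chi$, the chain rule produces four factors: (i) from the logarithmic rate law, $1/\!\big((1+\beta_{nt}\gamma_{n,S,t})\ln 2\big)$; (ii) from $10^{-L_{2}/10}$, the factor $-\tfrac{\ln 10}{10}$ together with $10^{-L_{2}(d_{n,S})/10}$ itself; (iii) from the dependence of $L_{2}$ on $\log_{10}d_{n,S}$, the slope $\alpha$; and (iv) the distance derivative $\dot d_{n,S}(t)$ from the previous step, which carries the remaining factor $1/d_{n,S}(t)$. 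Collecting terms, the $1/\ln 2$ of (i) combines with the $\ln 10$ of (ii) into the $\log_{2}10$ of the statement, the two $1/d_{n,S}(t)$ factors combine to $1/d_{n,S}^{2}(t)$, and $B_{S},P_{S},L_{1},Q_{MM},Q_{RM}$ together with the powers of $\tau$ assemble into \eqref{eq:pp2}.

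The main obstacle — and the only place where genuine approximation enters — is justifying the truncations that make this first-order expression legitimate. Three need to be argued: (a) replacing the exact per-slot increment $c_{n,S,t+1}-c_{n,S,t}$ by the differential, which is valid because over one slot the satellite displacement $\upsilon\tau$ is a negligible fraction of the link length $d_{n,S}(t)$, so the Taylor remainder is of order $(\upsilon\tau/d_{n,S})^{2}$ and is dropped; (b) neglecting $\Omega_{c}+\sigma^{2}$ against $\Omega_{n,S,t}$ in the SINR denominator, so that only $\Omega_{n,S,t}$ survives, consistent with the interference-limited regime already used in \eqref{eq:reinterference3}; and (c) approximating $1+\beta_{nt}\gamma_{n,S,t}$ by the stated $1+10^{-(0.1\alpha'+0.1\alpha\log_{10}d_{n,S}+0.1\chi)/10}$, i.e.\ folding $P_{S}L_{1}Q_{MM}Q_{RM}/(\Omega_{n,S,t}+\Omega_{c}+\sigma^{2})$ into the constant. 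Once these are in place, the remainder is routine bookkeeping of constants and $\tau$-powers through the product, which I would relegate to a one-line computation.
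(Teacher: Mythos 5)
Your overall route is the same as the paper's: hold everything except $d_{n,S}(t)$ fixed, differentiate $c_{n,S,t}=\beta_{nt}B_S\log_2\bigl(1+P'_S\,10^{-L_2(d_{n,S}(t))/10}\bigr)$ in $t$ by the chain rule (with $\dot d_{n,S}(t)=\bigl((x_0-x_{2,n})\upsilon\tau+\upsilon^2\tau t\bigr)/d_{n,S}(t)$), and approximate the per-slot variation by $\tfrac{\partial c_{n,S,t}}{\partial t}\,\tau$, which is exactly how Appendix~B proceeds and where the $\tau^2$ and the $d_{n,S}^2$ come from.

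There is, however, a genuine error in your constant bookkeeping. The factor $\ln 10$ from differentiating $10^{-L_2/10}$ does \emph{not} survive: since $L_2$ depends on $d$ through $\alpha\log_{10}d_{n,S}$, its $d$-derivative is $\alpha/(d_{n,S}\ln 10)$, and this $1/\ln 10$ cancels the $\ln 10$ you keep in step (ii). You cannot simultaneously harvest the second $1/d_{n,S}$ (which you need for the $d_{n,S}^2$ in the denominator) and retain the $\ln 10$ — both come from the same derivative of $\log_{10}d_{n,S}$. The correct prefactor is therefore $1/\ln 2$, which is precisely what the paper's appendix obtains, namely $\tfrac{\partial c_{n,S,t}}{\partial t}=-\tfrac{\alpha\beta_{nt}B_SP'_S}{10\ln 2}\,\tfrac{((x_0-x_{2,n})\upsilon\tau+\upsilon^2\tau t)\,g(t)}{(1+P'_Sg(t))\,d^2_{n,S}(t)}$ with $g(t)=10^{-(\alpha'+\alpha\log_{10}d_{n,S}(t)+\chi)/10}$; so your claimed origin of the $\log_2 10$ is not a valid step. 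Relatedly, your approximations (b) and (c) — dropping $\Omega_c+\sigma^2$ and replacing $1+P'_Sg(t)$ by $1+10^{-(\cdot)}$ — are not part of the paper's proof at all: the appendix keeps $P'_S=P_SL_1Q_{MM}Q_{RM}/(\Omega_{n,S,t}+\Omega_c+\sigma^2)$ throughout and simply stops at $\tilde c_{n,S,t}\approx\tfrac{\partial c_{n,S,t}}{\partial t}\tau$; the $\log_2 10$, the $0.1$ factors in the exponents, the bare $\Omega_{n,S,t}$, and the missing SINR constant inside $1+(\cdot)$ in the displayed proposition are artifacts of the statement that the paper's own derivation does not produce, so they should not be "proved" by introducing ad hoc truncations. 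Your step (a) (Taylor remainder of order $(\upsilon\tau/d_{n,S})^2$) is the only approximation the paper actually uses, and your justification of it is fine.
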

\begin{proof}
See Appendix B.
\end{proof}
}
As such, using Proposition 1, the BSs can evaluate the satellite backhaul data rates at slot $t$ in the form of $c_{n,S,t-1}+\tilde c_{n,S,t}$, without introducing unnecessary overhead while tracking the location of the LEO satellite.
The Lagrangian function at MBS $m$ is $L_m$. The optimization problem at MBS $m$ is:
\begin{equation}\label{eq:optm}
\begin{split}
D_m\left(\boldsymbol{\delta}_m, \boldsymbol{\varsigma}_m\right)&=\max_{\boldsymbol{\delta}_m \in\mathcal{X}_{m}} L_m\left(\boldsymbol{\delta}_m, \boldsymbol{\varsigma}_m\right) =\max_{\boldsymbol{\delta}_m \in\mathcal{X}_{m}} \sum\limits_{n\in\mathcal{N}}\sum\limits_{t\in\mathcal{T}}\varsigma_{n,m,t}{\delta}_{nmt}-\frac{1}{T}\sum\limits_{n\in\mathcal{N}}\sum\limits_{t\in\mathcal{T}}{\delta}_{n,m,t},
\end{split}
\end{equation}
where $\mathcal{X}_{m}=\left\{\boldsymbol{\delta}_m: ({\ref{eq:opt}}e), ({\ref{eq:opt}}i), \delta_{nmt}\in\left\{0,1\right\}, \forall n\in\mathcal{N}, t\in\mathcal{T}\right\}$ includes all the local constraints at MBS $m$. $\boldsymbol{\delta}_{m}=\left[{\delta}_{1m1},\cdots,{\delta}_{1mT},\cdots, {\delta}_{Nm1},\cdots,{\delta}_{NmT}\right]$ is the allocation scheme vector at MBS $m$. $\boldsymbol{\varsigma}_{m}=\left[{\varsigma}_{1m1},\cdots,{\varsigma}_{1mT},\cdots, {\varsigma}_{Nm1},\cdots,{\rho}_{NmT}\right]$ is the vector of dual variables related to MBS $m$. Here, $D_m\left(\boldsymbol{\delta}_m, \boldsymbol{\varsigma}_m\right)$ is the dual function at MBS $m$.
The Lagrangian function at the satellite is $L_S$, which is maximized as follows:
\begin{equation}\label{eq:opts}
\setlength{\abovedisplayskip}{1 pt}
\setlength{\belowdisplayskip}{1 pt}
\begin{split}
D_S\left(\boldsymbol{\beta}, \boldsymbol{\xi}\right)=\max_{\boldsymbol{\beta} \in\mathcal{X}_{S}} L_S\left(\boldsymbol{\beta}, \boldsymbol{\xi}\right) =&\max_{\boldsymbol{\beta} \in\mathcal{X}_{S}}\sum\limits_{n\in\mathcal{N}}\sum\limits_{t\in\mathcal{T}}\xi_{n,S,t} {\beta}_{nt}-\frac{1}{T}\sum\limits_{n\in\mathcal{N}}\sum\limits_{t\in\mathcal{T}}{\beta}_{nt},
\end{split}
\end{equation}
where $\mathcal{X}_{S}=\left\{\boldsymbol{\beta}: ({\ref{eq:opt}}f), ({\ref{eq:opt}}j), \beta_{nt}\in\left\{0,1\right\}, \forall n\in\mathcal{N}, t\in\mathcal{T}\right\}$ includes all the local constraints at the satellite. $\boldsymbol{\xi}=\left[{\xi}_{11},\cdots,{\xi}_{1T},\cdots, {\xi}_{N1},\cdots,{\xi}_{NT},\right]$ is the vector of dual variables related to the satellite. Here, $D_S\left(\boldsymbol{\beta}, \boldsymbol{\xi}\right)$ is the dual function at the satellite.
Clearly, these Lagrangian functions are actually the payoff functions at each of the user, BS, MBS, and the satellite, with dual variables $\lambda_{n,u,t}$, $\varsigma_{n,m,t}$, and $\xi_{n,S,t}$ functioning as normalized price $\frac{1}{T}p_{n,u,t}$, $\frac{1}{T}p_{n,m,t}$, and $\frac{1}{T}p_{n,S,t}$. The vectors of dual variables are noted as $\boldsymbol{\lambda}=\left[\lambda_{1,1,1},\cdots,\lambda_{N,U,T}\right]$, $\boldsymbol{\varsigma}=\left[\varsigma_{1,1,1},\cdots,\varsigma_{N,M,T}\right]$, and $\boldsymbol{\xi}=\left[\xi_{1,S,1},\cdots,\xi_{N,S,T}\right]$, respectively.

 \begin{lemma}\label{lemma1}
\emph{There exist at least one Walrasian equilibrium in the ISDN system.}

\end{lemma}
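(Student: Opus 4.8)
The plan is to show that a Walrasian equilibrium of Definition~1 is nothing but a primal--dual optimal pair of the social-welfare problem~(\ref{eq:opt}), and then to establish existence of such a pair by combining a compactness argument for the primal with Lagrangian duality applied to the market-clearance constraint. \textbf{Step 1 (reduce to a welfare problem).} First I would use the market-clearance constraint~(\ref{eq:opt}s) to cancel every monetary transfer between buyers and sellers, exactly as in the simplification leading to~(\ref{eq:obj}); after this reduction the total payoff $J$ is a function of the allocation and request vectors only and no longer depends on the prices. Thus~(\ref{eq:opt}) is a pure resource-allocation problem, and the search for a Walrasian equilibrium becomes the search for an optimal allocation together with a price vector that decentralizes it.

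\textbf{Step 2 (existence of a primal optimum).} Under the standing feasibility assumption on the ISDN instance (the demand, backhaul-capacity, time-budget and QoS constraints~(\ref{eq:opt}a), (\ref{eq:opt}c), (\ref{eq:opt}n)--(\ref{eq:opt}o) can be met simultaneously), the feasible region of~(\ref{eq:opt}) is non-empty; being a subset of a finite product of $\{0,1\}$'s and bounded boxes, it is compact, and $J$ is continuous, so a maximizer $(\boldsymbol{\rho}^*,\boldsymbol{\delta}^*,\boldsymbol{\beta}^*,\boldsymbol{\theta}^*,\boldsymbol{\varphi}^*,\boldsymbol{\varepsilon}^*)$ exists by the Weierstrass theorem. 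I would then invoke the ``always-on'' interference model~(\ref{eq:reinterference1})--(\ref{eq:reinterference3}): under it each rate $c_{n,u,t}$, $c_{n,m,t}$, $c_{n,S,t}$ depends only on its own binary variable, hence acts affinely on the binary domain, so the relaxation of~(\ref{eq:opt}) in which the integrality constraints~(\ref{eq:opt}p)--(\ref{eq:opt}r) are replaced by the boxes $[0,1]$ is a linear (hence convex) program, which I would use as the vehicle for duality.

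\textbf{Step 3 (supporting prices $=$ optimal multipliers).} The core step is to dualize \emph{only} the coupling equalities~(\ref{eq:opt}s), keeping all local constraints~(\ref{eq:opt}a)--(\ref{eq:opt}r) inside the per-agent feasible sets $\mathcal X_u,\mathcal X_n,\mathcal X_m,\mathcal X_S$. This is precisely the decomposition already carried out in~(\ref{eq:L1})--(\ref{eq:opts}): the Lagrangian splits as $L=\sum_u L_u+\sum_n L_n+\sum_m L_m+L_S$, and each summand is exactly the payoff of the corresponding buyer or seller, with $\lambda_{n,u,t}$, $\varsigma_{n,m,t}$, $\xi_{n,S,t}$ serving as the (normalized) prices. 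By strong duality for the convex relaxation of Step~2 --- after checking a constraint qualification, e.g.\ Slater for the inequalities or simply polyhedrality of the LP --- there are optimal multipliers $(\boldsymbol{\lambda}^*,\boldsymbol{\varsigma}^*,\boldsymbol{\xi}^*)$ and a primal optimum forming a saddle point of $L$. Separability then implies that the primal optimum restricted to each agent maximizes that agent's payoff over its own feasible set at the prices $(\boldsymbol{\lambda}^*,\boldsymbol{\varsigma}^*,\boldsymbol{\xi}^*)$, which is conditions~1 and~2 of Definition~1, while condition~3 ($\boldsymbol{\rho}^*=\boldsymbol{\theta}^*$, $\boldsymbol{\delta}^*=\boldsymbol{\varepsilon}^*$, $\boldsymbol{\beta}^*=\boldsymbol{\varepsilon}^*$) is just primal feasibility of~(\ref{eq:opt}s); hence the tuple supported by those prices is a Walrasian equilibrium.

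I expect the delicate point to be Step~3: ruling out a duality gap. On the integer program itself strong duality can fail, so one must either (i) show the linear relaxation is tight --- e.g.\ that the assignment-type constraint structure is totally unimodular, so an integral vertex optimum is attained and the LP dual multipliers are exact prices --- or (ii) admit time-shared (fractional) allocations, in which case the convex relaxation has the required saddle point and the resulting (fractional) equilibrium is the desired object; a secondary nuisance is verifying the constraint qualification for the backhaul-capacity constraint~(\ref{eq:opt}c), which the always-on model renders affine. As a fallback route that avoids convexity of the welfare problem entirely, one could instead define the aggregate excess-demand correspondence on the price simplex and apply Kakutani's fixed-point theorem, checking that the per-agent demand correspondences are non-empty, compact- and convex-valued (here again relaxing the binary requests or allowing randomization), upper hemicontinuous, and satisfy Walras' law; the fixed point then yields prices at which the market clears.
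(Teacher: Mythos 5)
Your proposal is sound in outline and takes a genuinely more complete route than the paper's own proof of this lemma. The paper's argument is much shorter: it relaxes the integrality constraints ({\ref{eq:opt}}p)--({\ref{eq:opt}}r) to $\left[0,1\right]$, observes that the relaxed problem is convex with a closed, bounded, nonempty feasible region so an optimum exists, and then cites an integer-programming sensitivity result (Cook et al.) to assert that an integer optimum of ({\ref{eq:opt}}) exists as well; it never constructs prices inside this lemma, implicitly identifying ``Walrasian equilibrium exists'' with ``the welfare problem ({\ref{eq:opt}}) has an optimal solution.'' The decentralization you carry out in your Step 3 --- dualizing only the clearance constraint ({\ref{eq:opt}}s), invoking Slater/strong duality, and reading the multipliers $\boldsymbol{\lambda},\boldsymbol{\varsigma},\boldsymbol{\xi}$ as the supporting prices so that conditions 1--2 of Definition~1 follow from separability and condition~3 from primal feasibility --- is exactly what the paper defers to the proof of Lemma~\ref{lemma2} (Appendix D). In effect you merge the paper's Lemma~\ref{lemma1} with the price-existence half of Lemma~\ref{lemma2}, which is arguably what the definition of a Walrasian equilibrium actually requires, so your version proves more than the paper's does at this point. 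Your flagged ``delicate point'' is also well taken: the coefficients in ({\ref{eq:opt}}a), ({\ref{eq:opt}}c), ({\ref{eq:opt}}n)--({\ref{eq:opt}}o) are real-valued rates, so the constraint matrix is not totally unimodular and your option (i) is unlikely to go through as stated; the paper sidesteps the same issue by the Cook et al.\ proximity citation, which bounds the distance between continuous and integer optima rather than guaranteeing an integer point that is optimal and supported by the same prices. So if binary allocations are insisted upon, either your fractional/time-sharing relaxation or the Kakutani fallback would need to be made precise --- a gap your write-up honestly acknowledges and the paper's proof quietly shares.
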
 
\begin{proof} See Appendix C
 \end{proof}
 
In summary, given the local constraints, ({\ref{eq:opt}}a)-({\ref{eq:opt}}r), the feasible region of the INLP problem in (\ref{eq:opt}) can be separated into several regions, i.e. $\mathcal{X}_{u}$, $\mathcal{X}_{n}$, $\mathcal{X}_{m}$, and $\mathcal{X}_{S}$, each of which serves as the domain of the optimization problem at one user, BS, MBS or the satellite. Precisely, the optimization problem at one user, BS, MBS or the satellite is the maximizing the value of its Lagrangian function, or its payoff function. The resulting maximal value of payoff at one user, BS, MBS or the satellite, with given dual variables, is defined as the dual function at this user, BS, MBS or the satellite. The dual of the INLP problem in (\ref{eq:opt}),  $D=\sum\limits_{u\in\mathcal{U}}D_u+\sum\limits_{n\in\mathcal{N}}D_n+\sum\limits_{m\in\mathcal{M}}D_m+D_S$, is formulated as the sum of maximal payoffs at all the users, BSs, MBSs, and the satellite in the system with given dual variables $\boldsymbol{\lambda}$, $\boldsymbol{\varsigma}$, and $\boldsymbol{\xi}$. The dual of (\ref{eq:opt}) is then minimized at each user, BS, MBS, and the satellite over dual variables $\boldsymbol{\lambda}$, $\boldsymbol{\varsigma}$, and $\boldsymbol{\xi}$, with the heavy ball method \cite{boyd2004convex}. With the convergence of the heavy ball algorithm, the market clearance constraints is satisfied, which gives rise to a Walrasian equilibrium of the ISDN system, as the maximal values for the payoffs of each user, BS, MBS, and the satellite are reached.

\vspace{-0.3cm}
  \subsection{Heavy ball method}
  The heavy ball method \cite{boyd2004convex} is an efficient sub-gradient method that seeks to find the optimal dual variables that minimizes the dual function, $D$, by iteratively updating the dual variables $\boldsymbol{\lambda}$, $\boldsymbol{\varsigma}$, and $\boldsymbol{\xi}$ in the opposite direction to the gradient $\nabla L(\boldsymbol{\lambda})$, $\nabla L(\boldsymbol{\varsigma})$, and $\nabla L(\boldsymbol{\xi})$, respectively.
  The dual function $D$ is, firstly, computed with initial dual variables $\boldsymbol{\lambda}^{\left(0\right)}$, $\boldsymbol{\varsigma}^{\left(0\right)}$ and $\boldsymbol{\xi}^{\left(0\right)}$. However, the allocation schemes that maximize payoff functions with such initial dual variables, or normalized prices, may not satisfy the market clearance constraints, that is, the data and backhaul service supply may not match the data and backhaul service demand. The mismatch between supply and demand is referred as to violation of the market clearance constrains, which is:
  \begin{equation}\label{eq:vl1}
\begin{split}
&\boldsymbol{s}^{\left(k\right)}_1=\boldsymbol{\theta}^{\left(k\right)}-\boldsymbol{\rho}^{\left(k\right)}, \;\;\;\;\;\boldsymbol{s}^{\left(k\right)}_2=\boldsymbol{\varphi}^{\left(k\right)}-\boldsymbol{\delta}^{\left(k\right)},\;\;\;\;\; \boldsymbol{s}^{\left(k\right)}_3=\boldsymbol{\varepsilon}^{\left(k\right)}-\boldsymbol{\beta}^{\left(k\right)},
\end{split}
\end{equation}
where $\boldsymbol{s}^{\left(k\right)}_1$, $\boldsymbol{s}^{\left(k\right)}_2$, and $\boldsymbol{s}^{\left(k\right)}_3$, are vectors of market clearance violations at radio access links, terrestrial backhaul links, and satellite backhaul links at iteration $k$ of the heavy ball dual variable updating process, respectively.  $\boldsymbol{\rho}^{\left(k\right)}$, $\boldsymbol{\delta}^{\left(k\right)}$, and $\boldsymbol{\beta}^{\left(k\right)}$ are the optimal allocation vectors that maximize Lagrangian functions $L_n$, $L_m$, and $L_S$ at each BS, MBS, and the satellite in the ISDN, at iteration $k$.
 $\boldsymbol{\theta}^{\left(k\right)}$, $\boldsymbol{\varsigma}^{\left(k\right)}$, and $\boldsymbol{\xi}^{\left(k\right)}$ are the optimal request vectors that maximize Lagrangian functions $L_u$ and $L_n$, for all the users and BSs in the system, at iteration $k$. As these market clearance violations are considered to reach $0$ at the Walrasian equilibrium of the ISDN, dual variables are updated iteratively at the BSs, MBSs, and the satellite for the decreased violation values.
Precisely, in the studied ISDN, each user and BS report their optimal service request scheme  $\boldsymbol{\theta}^{\left(k\right)}$, $\boldsymbol{\varsigma}^{\left(k\right)}$, and $\boldsymbol{\xi}^{\left(k\right)}$ to their associated BS, MBS, and satellite, at each iteration $k$.
The BSs will be responsible for updating prices $\boldsymbol{\lambda}$ based on the market clearance violation $\boldsymbol{s}^{\left(k\right)}_1$ derived from (\ref{eq:vl1}), such that $\boldsymbol{\lambda}^{\left(k+1\right)}=\boldsymbol{\lambda}^{\left(k\right)}+{\pi}^{\left(k\right)}\boldsymbol{\mu}^{\left(k\right)}_1$ 
, where $\boldsymbol{\mu}^{\left(k\right)}_1=\frac{\boldsymbol{s}^{\left(k\right)}_1}{\left\| {\boldsymbol{s}^{\left(k\right)}_1} \right\|}+\nu^{\left(k\right)}_1 \boldsymbol{\mu}^{\left(k-1\right)}_1$ with $\nu^{\left(k\right)}_1=\max\left\{0, -1.5\frac{\boldsymbol{s}^{\left(k\right)}_1\left(\boldsymbol{\mu}^{\left(k-1\right)}_1\right)^T}{\left\| {\boldsymbol{s}^{\left(k\right)}_1} \right\|{\left\| {\boldsymbol{\mu}^{\left(k-1\right)}_1} \right\|}} \right\}$.
 MBSs are responsible for updating prices  $\boldsymbol{\varsigma}$, based on the market clearance violation $\boldsymbol{s}^{\left(k\right)}_2$ derived from (\ref{eq:vl1}), with $\boldsymbol{\varsigma}^{\left(k+1\right)}=\boldsymbol{\varsigma}^{\left(k\right)}+{\pi}^{\left(k\right)}\boldsymbol{\mu}^{\left(k\right)}_2$, 
 where $\boldsymbol{\mu}^{\left(k\right)}_2=\frac{\boldsymbol{s}^{\left(k\right)}_2}{\left\| {\boldsymbol{s}^{\left(k\right)}_2} \right\|}+\nu^{\left(k\right)}_2 \boldsymbol{\mu}^{\left(k-1\right)}_2$ with $\nu^{\left(k\right)}_2=\max\left\{0,-1.5\frac{{{\bf{s}}_2^{\left( k \right)}{{\left( {\mu _2^{\left( {k - 1} \right)}} \right)}^T}}}{{\left\| {{\bf{s}}_2^{\left( k \right)}} \right\|\left\| {\mu _2^{\left( {k - 1} \right)}} \right\|}}  \right\}$.  
 The satellite, at the mean time, is obligated to update prices $\boldsymbol{\xi}$ based on $\boldsymbol{\xi}^{\left(k+1\right)}=\boldsymbol{\xi}^{\left(k\right)}+{\pi}^{\left(k\right)}\boldsymbol{\mu}^{\left(k\right)}_3$, 
 where $\boldsymbol{\mu}^{\left(k\right)}_3=\frac{\boldsymbol{s}^{\left(k\right)}_3}{\left\| {\boldsymbol{s}^{\left(k\right)}_3} \right\|}+\nu^{\left(k\right)}_3 \boldsymbol{\mu}^{\left(k-1\right)}_3$ with $\nu^{\left(k\right)}_3=\max\left\{0, -1.5\frac{\boldsymbol{s}^{\left(k\right)}_3\left(\boldsymbol{\mu}^{\left(k-1\right)}_3\right)^T}{\left\| {\boldsymbol{s}^{\left(k\right)}_3} \right\|{\left\| {\boldsymbol{\mu}^{\left(k-1\right)}_3} \right\|}} \right\}$. 
Here, $\boldsymbol{\lambda}^{\left(k\right)}$, $\boldsymbol{\varsigma}^{\left(k\right)}$ and $\boldsymbol{\xi}^{\left(k\right)}$ are vectors of the dual variables at the $k$-th iteration, $\pi^{\left(k\right)}$ is the updating step size in the price updating process. 
Then, the terrestrial users, BSs, MBSs, and the satellite are informed with the updated prices $\boldsymbol{\lambda}^{\left(k\right)}$, $\boldsymbol{\varsigma}^{\left(k\right)}$ and $\boldsymbol{\xi}^{\left(k\right)}$ and will find the optimal solutions of their Lagrangian function maximization problems,  $\boldsymbol{\theta}^{\left(k+1\right)}$, $\boldsymbol{\varsigma}^{\left(k+1\right)}$, $\boldsymbol{\xi}^{\left(k+1\right)}$, $\boldsymbol{\rho}^{\left(k\right)}$, $\boldsymbol{\delta}^{\left(k\right)}$, and $\boldsymbol{\beta}^{\left(k\right)}$ with the updated dual variables. If these new optimal solutions still do not satisfy the market clearance constraints, the dual variables will keep being updated.
\begin{algorithm}[t]\footnotesize
\caption{Proposed algorithm for ISDN total payoff maximization problem. }   
\label{alg:Framwork}   
\setlength{\abovecaptionskip}{-15pt} 
\setlength{\belowcaptionskip}{-15pt}
\begin{algorithmic} [1] 
\REQUIRE The working time limits of each BS, data service request of each user, QoS requirements at each user and BS. \\ 
\vspace{2pt}  
\ENSURE Initialize dual variables $\boldsymbol{\lambda}^{\left(0\right)}$, $\boldsymbol{\varsigma}^{\left(0\right)}$ and $\boldsymbol{\xi}^{\left(0\right)}$, which are the initial normalized prices.\\
\vspace{2pt}  
\STATE Set step $k=1$.
\vspace{2pt}  
\FOR {User $u=1:U$} 
\vspace{2pt}  
\STATE Solve optimization problem (\ref{eq:optu}) with dual variable $\boldsymbol{\lambda}^{\left(k-1\right)}$ and find the optimal solution $\boldsymbol{\theta}^{\left(k-1\right)}$.
\vspace{2pt}  
\STATE Send scheme $\boldsymbol{\theta}^{\left(k-1\right)}$ to BSs.
\vspace{2pt}  
\ENDFOR
\vspace{2pt}  
\FOR {BS $n=1:N$}
\vspace{2pt}  
\STATE Solve optimization problem (\ref{eq:optn}) with dual variables$\boldsymbol{\lambda}^{\left(k-1\right)}$, $\boldsymbol{\varsigma}^{\left(k-1\right)}$ and $\boldsymbol{\xi}^{\left(k-1\right)}$ and find the optimal solution $\boldsymbol{\rho}^{\left(k-1\right)}$, $\boldsymbol{\varphi}^{\left(k-1\right)}$ and $\boldsymbol{\varepsilon}^{\left(k-1\right)}$.
\vspace{2pt}  
\STATE Compute market clearance violation $\boldsymbol{s}^{\left(k\right)}_1$, update price $\boldsymbol{\lambda}^{\left(k\right)}$. 
\vspace{2pt}  
\STATE Send schemes $\boldsymbol{\varphi}^{\left(k-1\right)}$ and $\boldsymbol{\varepsilon}^{\left(k-1\right)}$ to MBSs and the satellite, respectively.
\vspace{2pt} 
\ENDFOR 
\vspace{2pt}  
\FOR {MBS $m=1:M$} 
\vspace{2pt}  
\STATE Solve optimization problem (\ref{eq:optm}) with dual variable $\boldsymbol{\varsigma}^{\left(k-1\right)}$ and find the optimal solution $\boldsymbol{\delta}^{\left(k-1\right)}$.  
\vspace{2pt}  
\STATE Compute market clearance violation $\boldsymbol{s}^{\left(k\right)}_2$, update price $\boldsymbol{\varsigma}^{\left(k\right)}$. 
\vspace{2pt}  
\ENDFOR  
\FOR {The satellite} 
\vspace{2pt}  
\STATE Solve optimization problem (\ref{eq:opts}) with dual variable $\boldsymbol{\xi}^{\left(k-1\right)}$ and find the optimal solution $\boldsymbol{\beta}^{\left(k-1\right)}$. 
\vspace{2pt}  
\STATE Compute market clearance violation $\boldsymbol{s}^{\left(k\right)}_3$, update price $\boldsymbol{\xi}^{\left(k\right)}$.    
\vspace{2pt}  
\ENDFOR  
\vspace{2pt}  
\STATE Set step $k=k+1$. 
\REPEAT
\STATE Step 2 until Step 19. 
\UNTIL{Convergence}
\vspace{2pt}  
\RETURN{ Optimal allocation vectors  $\boldsymbol{\theta}^*=\boldsymbol{\rho}^*$, $\boldsymbol{\varphi}^*=\boldsymbol{\delta}^*$ and $\boldsymbol{\varepsilon}^*=\boldsymbol{\beta}^*$}. 
\end{algorithmic}
\end{algorithm} 
\vspace{-0.3cm}
\subsection{Complexity and convergence}
At each iteration, the complexity of the heavy ball algorithm is $\mathcal{O}\left(UN+N\left(M+1\right)\right)$. Also, at each iteration of the heavy ball algorithm, the BSs, MBSs, and the satellite broadcast the updated dual variables, the users and the BSs report its service request schemes to the BSs, MBSs and the satellite. 
The number of information expected to be exchanged in the distributed heavy ball algorithm is $K^*\left(U+2N+M+1\right)$, with $K^*$, a considerably small number, being the index of the iteration when the heavy ball algorithm converges. Meanwhile, the number of information exchanged in a centralized algorithm would be $\left(U+M+1\right)N$. 
However, it is worth noting that, each piece of information being exchanged in the heavy ball algorithm only includes a vector of dual variables or service request schemes. In contrast, every piece of information being exchanged in the centralized algorithm contains not only vectors of dual variables and service request/allocation schemes, but also vectors of QoS requirements, service demands, resource budgets, and other information. In other words, the total amount of information that should be exchanged within the heavy ball algorithm is considerably smaller. Moreover, the distributed iterative algorithm alleviates the privacy and security concerns in the ISDN, as the payoff optimization problems at solved locally at each network device, and, thus, unnecessary information exchanges are avoided. As stated next, a Walrasian equilibrium is guaranteed at the convergence of the heavy ball algorithm.

\begin{lemma}\label{lemma2}
\emph{The proposed dual decomposition-heavy ball based algorithm will converge to the Walrasian equilibrium, at which the optimal total payoff is reached in the system, with diminishing step sizes. }

\end{lemma}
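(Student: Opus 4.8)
The plan is to recognize the price-update recursion of the heavy ball method as a momentum subgradient method applied to the Lagrangian dual of the integer program (\ref{eq:opt}), to show that the dual iterates converge to a dual minimizer under diminishing step sizes, and then to invoke Lemma~\ref{lemma1} together with the market-clearance structure to identify that minimizer with a Walrasian equilibrium at which the total payoff $J$ is maximal.

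First I would set up the dual picture. Once $\boldsymbol{\lambda},\boldsymbol{\varsigma},\boldsymbol{\xi}$ are fixed, the per-device problems (\ref{eq:optu})--(\ref{eq:opts}) decouple, so the dual function $D=\sum_{u}D_u+\sum_{n}D_n+\sum_{m}D_m+D_S$ is a pointwise maximum --- over the finite feasible sets $\mathcal{X}_u,\mathcal{X}_n,\mathcal{X}_m,\mathcal{X}_S$ --- of functions that are affine in $(\boldsymbol{\lambda},\boldsymbol{\varsigma},\boldsymbol{\xi})$; hence $D$ is convex and piecewise linear. By the envelope (Danskin) theorem, if $\boldsymbol{\theta}^{(k)},\boldsymbol{\rho}^{(k)},\boldsymbol{\varphi}^{(k)},\boldsymbol{\delta}^{(k)},\boldsymbol{\varepsilon}^{(k)},\boldsymbol{\beta}^{(k)}$ maximize the per-device Lagrangians at the current prices, then the vector with entries $\rho^{(k)}_{nut}-\theta^{(k)}_{nut}$, $\delta^{(k)}_{nmt}-\varphi^{(k)}_{nmt}$, $\beta^{(k)}_{nt}-\varepsilon^{(k)}_{nt}$ is a subgradient of $D$; equivalently, the negatives of the market-clearance violations $\boldsymbol{s}^{(k)}_1,\boldsymbol{s}^{(k)}_2,\boldsymbol{s}^{(k)}_3$ of (\ref{eq:vl1}) form a subgradient of $D$. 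Thus the update $\boldsymbol{\lambda}^{(k+1)}=\boldsymbol{\lambda}^{(k)}+\pi^{(k)}\boldsymbol{\mu}^{(k)}_1$, and its analogues for $\boldsymbol{\varsigma}$ and $\boldsymbol{\xi}$, moves the prices along the normalized negative subgradient $\boldsymbol{s}^{(k)}_i/\|\boldsymbol{s}^{(k)}_i\|$ augmented by the Polak--Ribi\`ere-type momentum term $\nu^{(k)}_i\boldsymbol{\mu}^{(k-1)}_i$ --- i.e.\ the heavy-ball form of the subgradient method for minimizing $D$.

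Next I would invoke convergence of this scheme. The search direction is a unit-norm negative subgradient plus a momentum term whose coefficient $\nu^{(k)}_i$ is bounded (it lies in $[0,1.5]$, since the correlation ratio in its definition lies in $[-1,1]$) and whose restart mechanism periodically resets the momentum, so the heavy-ball iteration reduces to a plain subgradient step whenever a restart triggers. Consequently the recursion inherits the convergence behaviour of the subgradient method (cf.\ \cite{boyd2004convex}): with diminishing step sizes satisfying $\pi^{(k)}\to 0$, $\sum_k\pi^{(k)}=\infty$ (and $\sum_k(\pi^{(k)})^2<\infty$), one obtains $D(\boldsymbol{\lambda}^{(k)},\boldsymbol{\varsigma}^{(k)},\boldsymbol{\xi}^{(k)})\to D^\star:=\min D$, the iterates approach the set of dual minimizers --- nonempty by Lemma~\ref{lemma1} --- and the market-clearance violations $\|\boldsymbol{s}^{(k)}_1\|,\|\boldsymbol{s}^{(k)}_2\|,\|\boldsymbol{s}^{(k)}_3\|$ vanish in the limit.

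Finally I would close the loop between the dual optimum and the equilibrium. At a dual minimizer the subgradient --- hence the market-clearance violation (\ref{eq:vl1}) --- vanishes, so (\ref{eq:opt}s) holds and every seller's supply is exactly consumed by the matching buyers' requests, which is condition~3 of the definition of the Walrasian equilibrium; conditions~1--2 hold because the returned vectors maximize the per-device Lagrangians, and these Lagrangians coincide with the per-device payoff functions (\ref{eq:pu})--(\ref{eq:ps}) once $\boldsymbol{\lambda},\boldsymbol{\varsigma},\boldsymbol{\xi}$ are read as the normalized prices $\tfrac{1}{T}p_{n,u,t}$, $\tfrac{1}{T}p_{n,m,t}$, $\tfrac{1}{T}p_{n,S,t}$. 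So the limit point is a Walrasian equilibrium. Moreover, once (\ref{eq:opt}s) holds the payment terms cancel in the objective, so at the equilibrium allocation $L=J$; that allocation is feasible for (\ref{eq:opt}), while by weak duality $D^\star$ upper-bounds the optimal value of (\ref{eq:opt}), so the equilibrium allocation attains that optimal value and $J$ is maximized. The main obstacle is this last step: because (\ref{eq:opt}) is an integer program, one must rule out convergence to a fractional relaxation with a strictly positive duality gap. I would dispatch this using Lemma~\ref{lemma1} --- which already furnishes a market-clearing (hence integral) price configuration, so $D^\star$ is attained and the gap is zero --- together with the observation that the per-slot allocation polytopes defined by (\ref{eq:opt}b) and (\ref{eq:opt}g)--(\ref{eq:opt}j) have integral vertices; a secondary technical check is that the Polak--Ribi\`ere restart in $\nu^{(k)}_i$ does not spoil the subgradient-method convergence, which is exactly the purpose of the $\max\{0,\cdot\}$ safeguard.
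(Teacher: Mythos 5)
Your proposal follows essentially the same route as the paper's own proof: interpret the market-clearance violations as (negative) subgradients of the decomposed dual $D$, invoke convergence of the deflected (heavy-ball) subgradient iteration under diminishing step sizes so that the violations vanish and the per-device Lagrangian maximizers satisfy condition~3 of the Walrasian equilibrium definition, and then close the optimality claim for the total payoff via duality. The only substantive differences are in justification rather than approach: the paper handles the momentum-direction convergence by citing the modified-gradient result of \cite{camerini1975improving} (the $\max\{0,\cdot\}$ safeguard with coefficient $1.5\le 2$ is exactly that setting, which is a cleaner warrant than your informal ``restart reduces to a plain subgradient step'' argument), and it closes the payoff-optimality step by asserting strong duality through Slater's condition, whereas you argue via weak duality plus a zero-gap/integrality discussion anchored in Lemma~\ref{lemma1} --- both treatments of the integer duality gap being comparably terse.
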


\begin{proof} 
See Appendix D.
 \end{proof}

\section{Simulation Results and Analysis}

For our simulations, we consider a scenario with $U=60$ mobile users, $N=10$ BSs (including $5$ SBSs and $5$ DBSs), $M=1$ macro base station, and $1$ LEO satellite constellation. Note that, the height of the DBSs is assumed to be $200$ meters. 
Other parameters used in the simulations are listed in Table I. The heavy ball based Walrasian equilibrium results are compared to the centralized optimization based results, the sub-gradient based results, and a random allocation scheme results considering the market clearance. All statistical results are averaged over a large number of independent runs.


\begin{table}
 \vspace{-0.45cm}
  \newcommand{\tabincell}[2]{\begin{tabular}{@{}#1@{}}#2\end{tabular}}
\renewcommand\arraystretch{1}
 \caption{
    \vspace*{-0.2em}Simulation Parameters\cite{840210}}\vspace*{-1em}
\centering  
\begin{tabular}{|c|c|c|c|}
\hline
\textbf{Parameter} & \textbf{Value} & \textbf{Parameter} & \textbf{Value} \\
\hline
$P_n $ & 20 dBm & $P_m$ & 43 dBm\\
\hline
$P_S$ & 9.23 dBW & $B_N $ & 56 MHz\\
\hline
$\sigma^2$ & -104 dB & $\Omega_c$ & 10.5354 dB\\
\hline

\end{tabular}
 \vspace{-0.5cm}
\end{table}

\setlength{\abovecaptionskip}{0pt} 
\setlength{\belowcaptionskip}{-15pt} 
\begin{figure}
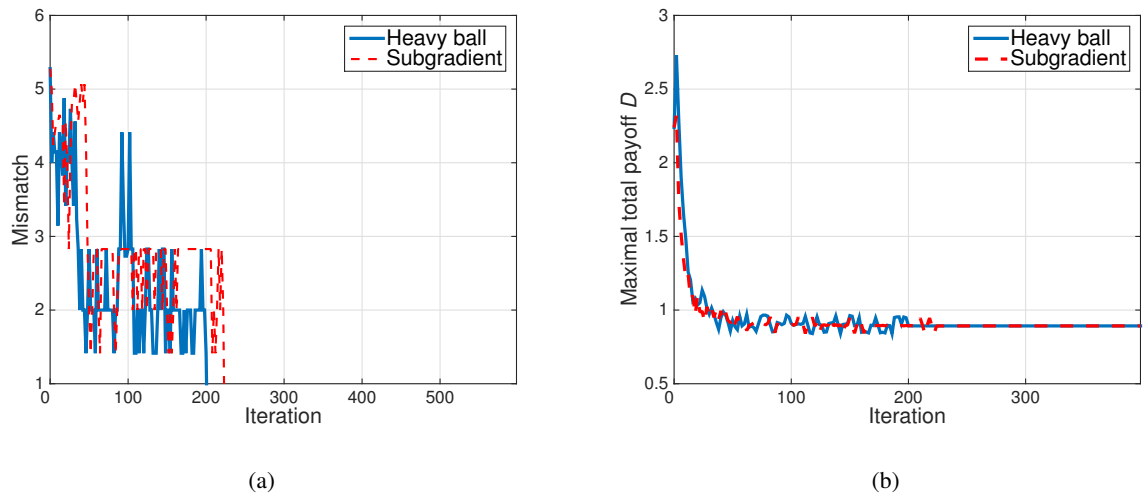

\centering
\subfigure[]{
\label{fig:subfig:a} 
\includegraphics[width=8cm]{mismatchcvg.pdf}}
\subfigure[]{
\label{fig:subfig:b} 
\includegraphics[width=8cm]{socialwalfarecvg.pdf}}
\caption{\footnotesize{Convergence of the heavy ball alogorithm.}}
\label{fig:subfig} 
\vspace{-0.5cm}
\end{figure}
Fig. \ref{fig:subfig} shows the convergence of the proposed heavy ball algorithm. In the results shown in Fig. \ref{fig:subfig}, the heavy ball algorithm requires approximately 200 iterations to reach convergence, which is $10\%$ less than the number of iterations required for convergence of the sub-gradient algorithm. This stems from the fact that, the updating direction of the heavy ball algorithm has a smaller angle towards the optimal result than the sub-gradient algorithm. From Fig. \ref{fig:subfig:a}, we can see that, the mismatch between the supply and demand of data and backhaul service reaches $0$ upon the convergence of the heavy ball algorithm and sub-gradient algorithm where the market clearance constraint is satisfied. Meanwhile, Fig. \ref{fig:subfig:b} shows that the minimization of $D$ is reached upon the convergence of both algorithms where the total payoff maximization problem (\ref{eq:opt}) is solved. In such a case, a Walrasian equilibrium scheme is reached at the convergence of the heavy ball algorithm and, also, the sub-gradient algorithm. 
\begin{figure}[t]
  \begin{center}
   \vspace{0cm}
    \includegraphics[width=9.2cm]{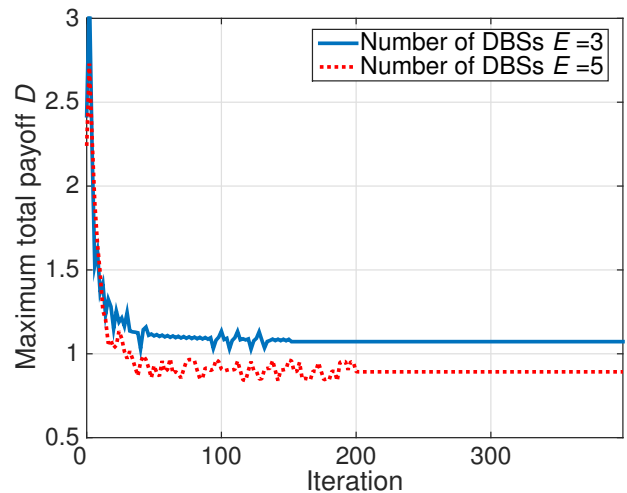}
    \vspace{-0.2cm}
    \caption{\label{complex} Convergence of heavy ball algorithm as the number of DBSs varies. }
  \end{center}\vspace{-0.7cm}
\end{figure}

Fig. \ref{complex} shows the convergence of the proposed heavy ball algorithm as the number of DBSs varies.  In Fig. \ref{complex}, we can see that as the number of DBS increases, the number of iterations needed for convergence increases. This stems from the fact that, as the number of DBSs increases, the number of communication links increases which requires more iterations to find the optimal user association and resource allocation schemes. Fig. \ref{complex} also shows that the maximum total payoff decreases as the number of DBSs increases. This is because the data rates at each communication link decrease, with the newly added DBSs introducing significant interference to the ISDN.

In Fig. \ref{figurep}, we show how, at a Walrasian equilibrium of the studied ISDN, the price of the data service at radio access links and backhaul service at backhaul links varies with the normalized data rate. As normalized data rate increases from $0.1$ to $0.5$, the prices for both a data service and backhaul service will increase. This is because the users and BSs would submit higher bids for high-rate data and backhaul service.  However, the price only slightly increases as normalized data rate increases from $0.3$ to $0.5$. This stems from the fact that, with the increase in its normalized data rate, a link gets the highest ``quote'' (or price) of its data or backhaul service for guaranteed adoption, henceforth, the users and BSs will stop proposing higher quotes for this link so as to maintain higher payoffs. 

\begin{figure}[t]
  \begin{center}
   \vspace{0cm}
    \includegraphics[width=9.2cm]{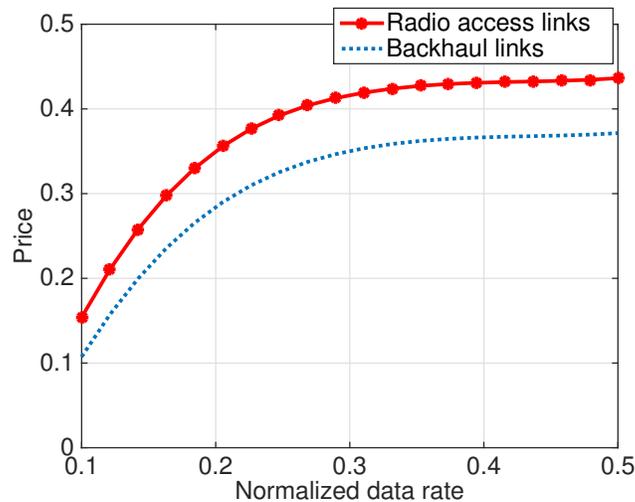}
    \vspace{-0.2cm}
    \caption{\label{figurep} Price as the normalized data rate varies. }
  \end{center}\vspace{-0.7cm}
\end{figure}


%
\begin{figure}
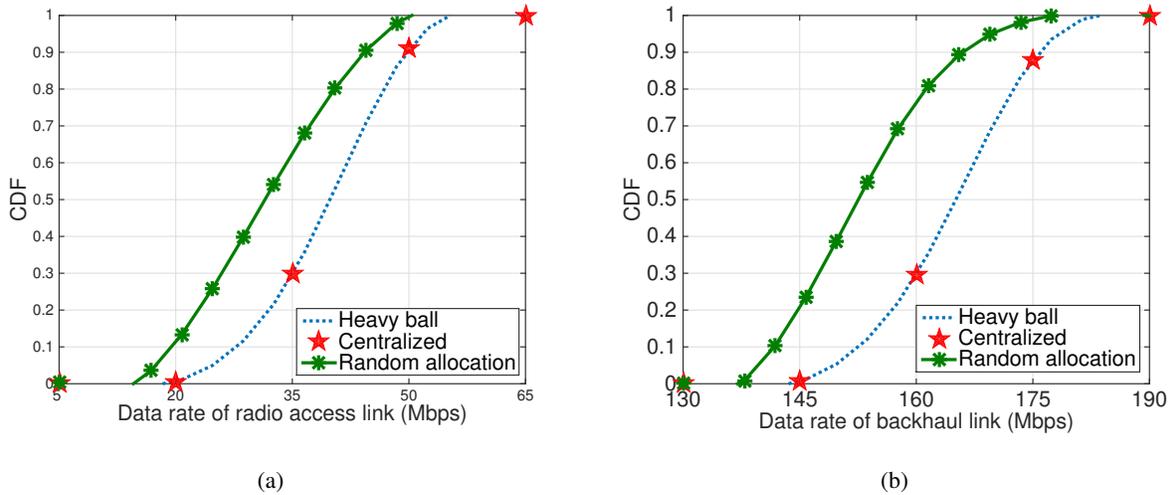

\centering
\subfigure[]{
\label{subfig2:a} 
\includegraphics[width=8cm]{RA.pdf}}
\subfigure[]{
\label{subfig2:b} 
\includegraphics[width=8cm]{BH.pdf}}
\caption{\footnotesize{Cumulative distribution function (CDF) in data rates.}}
\label{subfig2} 
\vspace{-0.5cm}
\end{figure}

Fig. \ref{subfig2} shows the cumulative distribution function (CDF) in data rates resulting from all the considered schemes. In Fig. \ref{subfig2:a}, we can see that the proposed approaches achieves  $2.1$ times gains in the number of radio access links that operate over a $40$ Mbps rate, compared to the random allocation scheme. The proposed approach yields over $3.4$ times gains in the number of backhaul links with over $1.6$ Gbps data rate, compared to the random allocation scheme as shown in Fig. \ref{subfig2:b}. This means that more users and BSs receive higher data rates with the proposed solution. The main reason is that the proposed algorithm encourages communication links to associate with the optimal users and BSs, while fully exploiting the time resources. Meanwhile, Fig. \ref{subfig2} also shows that the proposed algorithm yields no loss on data rates, compared with the centralized optimization method.


\begin{figure}[t]

  \begin{center}
   \vspace{0cm}
    \includegraphics[width=9.2 cm]{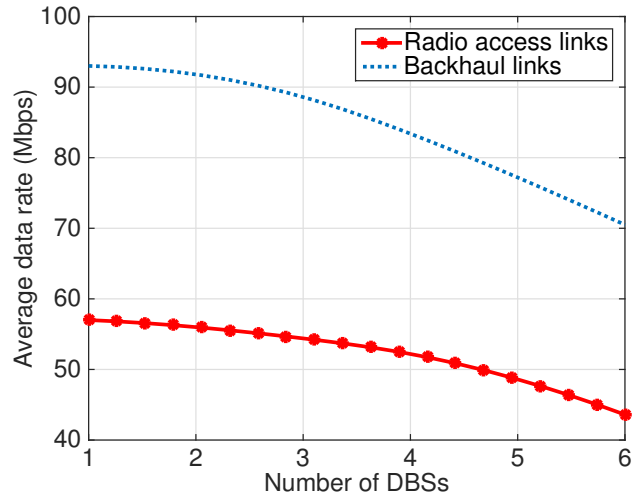}
    \vspace{-0.2cm}
    \caption{\label{figuredv} Average data rate as the number of DBSs varies. }
  \end{center}\vspace{-0.7cm}
\end{figure}

In Fig. \ref{figuredv}, we show how the average data rate at a Walrasian equilibrium varies with the number of DBSs. As the number of DBSs increases from $1$ to $6$, the average data rates at both radio access links and backhaul links decreases. This stems from the fact the newly added DBSs introduce significant interference to the system. Fig. \ref{figuredv} also shows that, the decrease in the average data at backhaul links is higher than the one at radio access links. This is because the interference introduced by a DBS over the backhaul links is larger than the one it introduces to the radio access links.

\begin{figure}
  \begin{center}
   \vspace{0cm}
    \includegraphics[width=9.2cm]{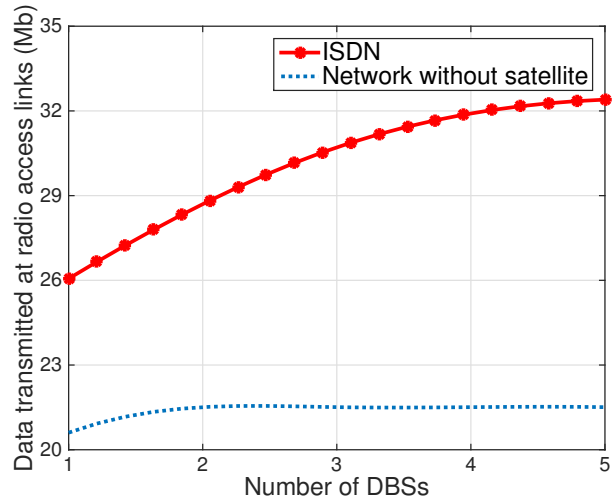}
    \vspace{-0.2cm}
    \caption{\label{figuret} Transmitted data at radio access links as the number of DBSs varies. }
  \end{center}\vspace{-0.7cm}
\end{figure}

Fig. \ref{figuret} shows how the amount of data service provided at radio access links vary as the number of DBSs increases. Fig.  \ref{figuret} shows that, as the number of DBS increases, the amount of data service provided at radio access links increases, with the increase of available time resources in the ISDN.  From Fig.  \ref{figuret}, we can also see that, as the number of DBSs increases, the amount of data service provided at radio access links increases at a lower speed. This is because a larger number of DBSs yields lower data rates. 
Fig. \ref{figuret} also shows that the amount of data service provided at the radio access links stops increasing as the number DBSs increases from $3$ to $6$ when the satellite is not deployed in the system. This is because, as the number of DBSs increases, the only deployed MBS is not capable of serving all the BSs, which leads to some users not being served.

\begin{figure}[t]
  \begin{center}
   \vspace{0cm}
    \includegraphics[width=9.2 cm]{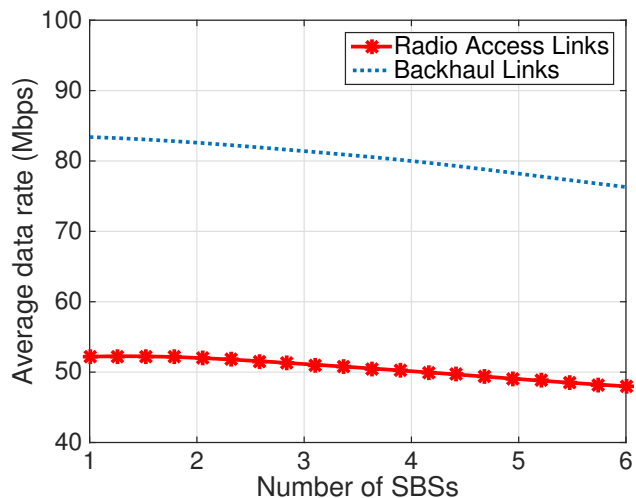}
    \vspace{-0.2cm}
    \caption{\label{figuresbs} Average data rate as the number of SBSs varies. }
  \end{center}\vspace{-0.7cm}
\end{figure}
Fig. \ref{figuresbs} shows how the average data rate at a Walrasian equilibrium varies with the number of SBSs. In Fig. \ref{figuresbs}, we can see that, as the number of SBSs increases from $1$ to $6$, the average data rates at both radio access links and backhaul links decreases. This stems from the fact the newly added SBSs introduce interference to the system. However, the decrease in data rates caused by adding more SBSs is much smaller than the one caused from increasing the number of DBSs. This is due to the fact that DBSs can provide LoS links to every other devices in the ISDN and, hence,  they can introduce a DBS can introduce a significantly larger interference over the backhaul and radio links.


\vspace{-0.12cm}
\section{Conclusion}
In this paper, we have introduced a novel joint resource allocation and user association problem in an ISDN system in which satellite backhaul links share time resources with terrestrial backhaul links and terrestrial radio access links. We have formulated this problem using a competitive market in which the data rate of the communication links is optimized distributively with a heavy ball algorithm. Simulation results have shown that, by considering the Walrasian equilibrium in the market, the proposed algorithm can yield over two-fold gain in terms of  the number of radio access links with over $40$ Mbps rates, and over three-fold gain in terms of the number of backhaul links with over $1.6$ Gbps rates. The heavy ball algorithm also yields a $10\%$ improvement in the convergence speed, compared to the sub-gradient algorithm.
 \
\vspace{0.2cm}
 \section*{Appendix}
 {
\subsection{Proof of Proposition \ref{pp1}}\label{Bp:b}  
\begin{proof} 
{Based on (\ref{eq:channel}), (\ref{eq:cnst}), and (\ref{eq:reinterference3}), the data rate of link $I^S_{n}$ at slot $t$ can be given as:
\begin{equation}\label{eq:pv1}
\setlength{\abovedisplayskip}{5 pt}
\setlength{\belowdisplayskip}{5 pt}
\begin{split}
c_{n,S,t}=\beta_{nt}B_S\log\left(1+\frac{P_S {L_{1}10^{\frac{-\left(\alpha'+\alpha\log_{10}d_{n,S}\left(t\right)+\chi\right)}{10}}Q_{MM}Q_{RM}}}{\Omega_{n,S,t}+\Omega_c+\sigma^2}\right),
\end{split}
\end{equation} 
where $d_{n,S}\left(t\right)=\sqrt {\left(x_0+\upsilon t \tau-x_{2,n}\right)^2+\left(y_0-y_{2,n}\right)^2+\left(f_0-f_{2,n}\right)^2}$ is the distance between the satellite and BS $n$ at slot $t$, with $\frac{\partial d_{n,S}\left(t\right)}{\partial t}=\frac{\left(x_0-x_{2,n}\right)\upsilon\tau+\upsilon^2\tau t}{d_{n,S}\left(t\right)}$. Also, we define function $g\left(t\right)$ in a form of:
 \begin{equation}\label{eq:pv2}
\setlength{\abovedisplayskip}{5 pt}
\setlength{\belowdisplayskip}{5 pt}
\begin{split}
g\left(t\right)=10^{\frac{-\left(\alpha'+\alpha\log_{10}d_{n,S}\left(t\right)+\chi\right)}{10}},
\end{split}
\end{equation} 
such that:
 \begin{equation}\label{eq:pv3}
\setlength{\abovedisplayskip}{5 pt}
\setlength{\belowdisplayskip}{5 pt}
\begin{split}
\frac{\partial g\left(t\right)}{\partial t}=-\frac{\alpha\left(x_0\upsilon\tau+\upsilon^2\tau t\right)}{10d^2_{n,S}\left(t\right)}10^{\frac{-\left(\alpha'+\alpha\log_{10}d_{n,S}\left(t\right)+\chi\right)}{10}}=-\frac{\alpha\left(x_0\upsilon\tau+\upsilon^2\tau t\right)}{10d^2_{n,S}\left(t\right)}g\left(t\right).
\end{split}
\end{equation} 
Denote $P'_S=\frac{P_SL_1Q_{MM}Q_{RM}}{\Omega_{n,S,t}+\Omega_c+\sigma^2}$, such that $c_{n,S,t}\left(t\right)=\beta^*_{nt}B_S\log_2\left(1+P'_S g\left(t\right)\right)$. In such a case:
 \begin{equation}\label{eq:pv3}
\setlength{\abovedisplayskip}{5 pt}
\setlength{\belowdisplayskip}{5 pt}
\begin{split}
\frac{\partial c_{n,S,t}\left(t\right)}{\partial t}=\frac{\beta^*_{nt}B_S}{\ln 2}\frac{P'_S g'\left(t\right)}{1+P'_S g\left(t\right)}=-\frac{\alpha\beta^*_{nt}B_SP'_S}{10\ln 2}\frac{\left(\left(x_0-x_{2,n}\right)\upsilon\tau+\upsilon^2\tau t\right) g\left(t\right)}{\left(1+P'_S g\left(t\right)\right)d^2_{n,S}\left(t\right)}.
\end{split}
\end{equation} 
As each studied time slot has a rather short time duration $\tau$, we can approximate the variation of $c_{n,S,t}$ in one time slot as $\frac{\partial c_{n,S,t}\left(t\right)}{\partial t}\tau$. This completes the proof.
}
\end{proof}}
\subsection{Proof of Lemma \ref{lemma1}}
\begin{proof} We first, relax the integer constraints ({\ref{eq:opt}}p)-({\ref{eq:opt}}r) into ${\rho}_{nut},  {\theta}_{nut}, {\delta}_{nmt}, {\varphi}_{nmt},{\beta}_{nt}, {\varepsilon}_{nt}\in \left[0,1\right]$, and, thus, the maximization problem ({\ref{eq:opt}}) becomes a convex problem. Given the bounded, closed feasible region defined by ({\ref{eq:opt}}a)-({\ref{eq:opt}}o), ({\ref{eq:opt}}s), and ${\rho}_{nut},  {\theta}_{nut}, {\delta}_{nmt}, {\varphi}_{nmt},{\beta}_{nt}, {\varepsilon}_{nt}\in \left[0,1\right]$, the relaxed continuous convex problem must have one optimal solution. Also, considering that the feasible region defined by ({\ref{eq:opt}}a)-({\ref{eq:opt}}s) is not empty, for each optimal solution to the relaxed convex problem, there must exist at least one integer optimal solution to the INLP problem in (\ref{eq:opt}) based on \cite{cook1986sensitivity}. This completes the proof.
 \end{proof}
 \subsection{Proof of Lemma \ref{lemma2}}
 \begin{proof} 
We first observe that, with the non-summable diminishing updating  step size $\pi^{\left(k\right)}$, the heavy ball algorithm is guaranteed to converge to optimal solutions of the studied dual minimization problem based on \cite{camerini1975improving}. With the convergence of the heavy ball algorithm at iteration $K^*$, market clearance violations tend to be $0$, meaning that  $\boldsymbol{\theta}^{\left(K^*\right)}=\boldsymbol{\rho}^{\left(K^*\right)}$, $\boldsymbol{\varphi}^{\left(K^*\right)}=\boldsymbol{\delta}^{\left(K^*\right)}$ and $\boldsymbol{\varepsilon}^{\left(K^*\right)}=\boldsymbol{\beta}^{\left(K^*\right)}$, with $\boldsymbol{\theta}^{\left(K^*\right)}$, $\boldsymbol{\rho}^{\left(K^*\right)}$, $\boldsymbol{\varphi}^{\left(K^*\right)}$, $\boldsymbol{\delta}^{\left(K^*\right)}$, $\boldsymbol{\varepsilon}^{\left(K^*\right)}$, and $\boldsymbol{\beta}^{\left(K^*\right)}$ being, respectively, the optimal solutions of payoff maximization problems (\ref{eq:optu})-(\ref{eq:opts}) at iteration $K^*$. Thus,  at the convergence of the heavy ball algorithm, a Walrasian equilibrium in the competitive market is reached. 

Also, as the constraints are all linear equalities and inequalities, and the feasible region defined by these constrains have at least one interior point, the strongly duality between the primal total payoff maximization problem (\ref{eq:opt}) and its dual minimization problem holds based on the Slater's condition  \cite{boyd2004convex}. Thus, as the heavy ball algorithm converges, the optimal solutions of the total payoff maximization problem (\ref{eq:opt}) must be reached with the heavy ball algorithm. This completes the proof.
 \end{proof}

\vspace{-0.12cm}

\bibliographystyle{IEEEbib}
\def\baselinestretch{1.42}
\bibliography{references}
\end{document}